\begin{document}
\sloppy
\title[Approximate shortest paths avoiding a failed vertex]{Approximate shortest paths avoiding a failed vertex : optimal
size data structures for unweighted graphs}

\author[aut1]{N. KHANNA}{NEELESH KHANNA}
\address[aut1]{Oracle India Pvt. Ltd, Bangalore-560029, India.}
\email{neelesh.khanna@gmail.com}

\author[aut2]{S. BASWANA}{SURENDER BASWANA}
\address[aut2]{Indian Institute of Technology Kanpur, India.}
\email{sbaswana@cse.iitk.ac.in}

\thanks{Part of this work was done while the authors were at Max-Planck
Institute for Computer Science, Saarbruecken, Germany during the period 
May-July 2009.}

\keywords{Shortest path, distance, distance queries, oracle}
\subjclass{E.1 {\bf [Data Structures]}:{Graphs and Networks};
{G.2.2}{\bf [Discrete Mathematics]}:{Graph Theory} - {\em Graph Algorithms}
}

\begin{abstract}
Let $G=(V,E)$ be any undirected graph on $V$ vertices and
$E$ edges. A path $\textbf{P}$ between any two
vertices $u,v\in V$ is said to be $t$-approximate shortest path if its length
is at most $t$ times the length of the shortest path between $u$ and $v$.
We consider the problem of building a compact data structure for a
given graph $G$ which is capable of answering the following query for
any $u,v,z\in V$ and $t>1$.

\centerline{\em report $t$-approximate shortest path between $u$ and $v$ when
vertex $z$ fails}

We present data structures for the single source as well all-pairs versions
of this problem. Our data structures guarantee optimal query time. Most
impressive feature of our data structures is that their size {\em nearly} match
the size of their best static counterparts.
\end{abstract}

\maketitle

\section{Introduction}
The shortest paths problem is a classical and well studied algorithmic problem
of computer science. This problem requires processing of a given graph
$G=(V,E)$ on $n=|V|$ vertices and $m=|E|$ edges to compute a data structure
using which shortest path or distance between any two vertices can be
efficiently reported. Two famous and thoroughly studied versions of this problem
are single source shortest paths (SSSP) problem and all-pairs shortest
paths (APSP) problem.

Most of the applications of the shortest paths problem involve real
life graphs and networks which are prone to failure of nodes (vertices) and
links (edges).
This has motivated researchers to design {\em dynamic} solution for the shortest
paths problem. For this purpose, one has to first develop a suitable model
for the shortest paths problem in dynamic networks. In fact two such models
exists, and each of them has its own algorithmic objectives.

The shortest paths problem in the first model is described
as follows : There is an initial graph followed by an on-line sequence of
insertion and deletion of edges interspersed with shortest path (or distance) 
queries. Each query has to be answered with respect to the
graph which exists at that moment (incorporating all the updates preceding the
query on the initial graph).
A trivial solution of this problem is to recompute all-pairs shortest
paths from scratch after each update. This is certainly a wasteful approach
since a single update usually does not cause a huge change in the
all-pairs distance information. Therefore, the algorithmic objective here
is to maintain a data structure which can answer distance query efficiently
and can be updated after any edge insertion or deletion in an efficient manner.
In particular, the time required to update the data structure has to  be
substantially less than the running time of the best static algorithm.
Many novel algorithms have been designed in the last ten years for this problem
and its variants (see \cite{DI:4} and the references therein).

On one hand the first model is important since it captures the worst possible
hardness of any dynamic graph problem. On the other hand, it can also be
considered as a pessimistic model for real life networks.
It is true that the networks are never immune to
failures. But in addition to it, it is also rare to have networks which may
have arbitrary number of failures in normal circumstances.
It is  essential for network designers to choose suitable technology to
make sure that the failures are quite infrequent in the network.
In addition, when a vertex or edge fails (goes down), it does not
remain failed/down indefinitely. Instead, it comes
up after some finite time due to simultaneous repair mechanism going on in the
network. These aspects can be captured in the second model which takes as input
a graph and a number $\ell \ll n$. This model assumes that there will be at
most $\ell$ vertices or edges which may be inactive at any time, though
the corresponding set of failed vertices or edges may keep changing as the
time progresses : the old failed vertices become active while some new active
vertices may fail. The algorithmic objective in this model is to preprocess the
given graph to construct a compact data structure which for any subset $S$ of
at most $\ell$ vertices may answer the following query for any $u,v\in V$.

\centerline{\em
Report the shortest-path (or distance) from $u$ to $v$ in $G\backslash S$.}

It is desired that each query gets answered in {\em optimal time} : retrieval of
distance in $O(1)$ time and the shortest path in time which is of the order
of the number of its edges. The ultimate research goal would be to understand
the complexity of the above problem for any given value $\ell$. In this
pursuit, the first natural step would be to efficiently solve and thoroughly
understand the complexity of the problem for the case $\ell=1$, that is, 
the shortest paths problem avoiding any failed vertex. Interestingly, this 
problem appears as a sub problem in many other related problems, namely, 
Vickrey pricing of networks \cite{HS:1}, most vital node of a shortest path 
\cite{NPW:3}, the replacement path problem \cite{R:7}, 
and shortest paths avoiding forbidden subpaths 
\cite{ahmed_et_al:LIPIcs:2009:1831}.

The first nontrivial and quite significant breakthrough on the all-pairs
version of this problem was made by Demetrescu et al. \cite{DT:1}. They
designed an $O(n^2\log n)$ space data structure, namely {\em distance 
sensitivity oracle}, which is capable of
reporting the shortest path between any two vertices avoiding
any single failed vertex. The preprocessing time of this data structure
is $O(mn^2)$. Recently, Bernstein and Karger \cite{BK:9} improved
the preprocessing time to $O(mn\log n)$.
Though ${\Theta}(n^2\log n)$ space bound of this all-pairs
distance sensitivity oracle is optimal up to
logarithmic factors, it is too large for many real life graphs which appear
in various large scale applications \cite{TZ:5}. In most of these graphs
usually $m\ll n^2$, hence a table of $\Theta(n^2)$ size may be too large for
practical purposes. However, it is also known \cite{DT:1} that  even a data structure which 
reports exact distances from a fixed source avoiding a single failed vertex will require $\Omega(n^2)$ space in the worst case.
So approximation seems to be the only way to design a small space compact
data structure for the problem of shortest paths avoiding a failed vertex.
A path between $u,v\in V$ is said to be $t$-approximate shortest path if its
length is at most $t$ times that of the shortest path between the two.
The factor $t$ is usually called the stretch. We would like to state here that
many algorithms and data structures have been designed in the last fifteen years
for the static all-pairs approximate shortest paths
(see \cite{BK:6,TZ:5} and references therein). The prime motivation
underlying these algorithms has been to achieve sub-quadratic space and/or
sub-cubic preprocessing time for the static APSP problem.
However, no data structure was designed in the past for approximate shortest
paths avoiding any failed vertex.

In this paper, we present really compact data structures which are capable
of reporting approximate shortest paths between two vertices avoiding any 
failed vertex in undirected graphs. The
most impressive feature of our data structures is their nearly optimal size.
In fact their size almost matches the size of their best static counterparts.
\subsection{New Results}
\noindent
{\bf Single source approximate shortest paths avoiding any failed vertex.}\\
First we address weighted graphs. For the weighted graphs, we present an
$O(m\log n)$ time constructible data structure of size $O(n \log n)$ which
can report 3-approximate shortest path from the source to any vertex $v\in V$
avoiding any $x\in V$.
We then consider the case of undirected unweighted graphs.
For these graphs, we present an $O(n\frac{\log n}{\epsilon^3})$ space data
structure which can even report $(1+\epsilon)$-approximate shortest path
for any $\epsilon>0$.

\noindent
{\bf All-pairs approximate shortest paths avoiding any failed vertex.}\\
Among the existing data structures for static all-pairs approximate shortest
paths, the {\em approximate distance oracle} of Thorup and Zwick \cite{TZ:5}
stands out due to its amazing features.
Thorup and Zwick \cite{TZ:5} showed that an
undirected graph can be preprocessed in sub-cubic time to build a data
structure of size $O(kn^{1+1/k})$ for any $k>1$. This data structure, despite
of its sub-quadratic size, is capable of reporting $(2k-1)$-approximate
distance between any two vertices in $O(k)$ time
(and the corresponding approximate shortest path in optimal time), and hence
the name {\em oracle}. Moreover, the size-stretch trade off achieved by this 
data structure is essentially optimal.
It is a very natural question to explore whether it is possible to design
all-pairs approximate distance oracle which may handle single vertex failure.
We show that it is indeed possible for unweighted graphs. For this purpose, we
suitably modify the approximate distance oracle of Thorup and Zwick
\cite{TZ:5} using some new insights and our single source
data structure mentioned above. These modifications
make the approximate shortest-paths oracle of Thorup and Zwick handle vertex
failure easily, and (surprisingly) still preserving the old (optimal)
trade-off between the space and the stretch. For precise details, see
Theorem \ref{APASP-oracle-under-vertex-failure}.

For the algorithmic details missing in this extended abstract due to page
limitations, we suggest the reader to refer to the journal version
\cite{KB:09}. Our data structures can be easily adapted for handling edge
failure as well without any increase in space or time complexity.

\section{Preliminaries}
We use the following notations and definitions in the context of a given
undirected graph $G=(V,E)$ with $n=|V|$, $m=|E|$ and a weight
function $\omega : E \rightarrow \textbf{R}^+$.
\begin{itemize}
\item
$T_r$ : single source shortest path tree rooted at $r$.
\item $\textbf{P}(x,y)$ : the shortest path between $x$ and $y$.
\item $\delta(x,y)$ : the length of the shortest path between $x$ and $y$.
\item
$\textbf{P}(x,y,z)$ : the shortest path between $x$ and $y$ avoiding vertex $z$.
\item
$\delta(x,y,z)$ : the length of the shortest path between $x$ and $y$ avoiding
vertex $z$.
\item
$T_r(x)$ : the subtree of $T_r$ rooted at $x$.
\item $G_r(x)$ :
the subgraph induced by the vertices of set $T_r(x)$ and augmented by vertex
$r$ and edges from $r$ as follows. For each $v\in T_r(x)$ with
neighbors outside $T_r(x)$, keep an edge $(r,v)$
of weight = $\min_{(u,v)\in E, u\notin T_r(x)}(\delta(r,u)+\omega(u,v))$.
\item $P::Q$ : a path formed by concatenating path $Q$ at the end of path $P$
with an edge $(u,v)\in E$, where $u$ is the last vertex of $P$ and $v$ is the
first vertex of $Q$.
\item
$E(X)$ : the set of edges from $E$ with at least one endpoint in $X$.
\end{itemize}
Our algorithms will also use a data structure for answering
lowest common ancestor (LCA) queries on $T_r$. There exists an $O(n)$ time 
computable data structure which occupies $O(n)$ space and can answer any LCA 
query in $O(1)$ time (see \cite{BF:00} and references therein).

\section{Single source 3-approximate shortest paths avoiding a failed vertex}
We shall first solve a simpler sub-problem where the vertex which may
fail belong to a given path $P \in T_r$. Then we use divide and conquer
strategy wherein we decompose $T_r$ into a set of disjoint paths and for each
such path, we solve this sub-problem.
%
\subsection{Solving the Sub-Problem : the failures of a vertex from a given path $\textbf{P}(r,t)$}
\noindent
Given the shortest path tree $T_r$, let
$\textbf{P}(r,t)=\langle r(=x_0),x_1,...,x_k(=t)\rangle$ be any shortest path
present in $T_r$. We shall design an $O(n)$ space data structure which
will support retrieval of a 3-approximate shortest path from $r$ to any
$v\in V$ when some vertex from $\textbf{P}(r,t)$ fails.
The preprocessing time of our algorithm will be $O(m+n\log n)$ which matches
that of Dijkstra's algorithm. The algorithm is inspired by the algorithm of
Nardelli et al. \cite{NPW:3} for computing the most vital vertex on
a shortest path.
\begin{figure}[h]
\psfrag{r}{$r$}
\psfrag{xi-1}{$x_{i-1}$}
\psfrag{xi}{$x_i$}
\psfrag{xi+1}{$x_{i+1}$}
\psfrag{Oi}{$O_i$}
\psfrag{Ui}{$U_i$}
\psfrag{Tri}{$D_i$}
\psfrag{P}{$\textbf{P}$}
\centerline{\epsfysize=115pt \epsfbox{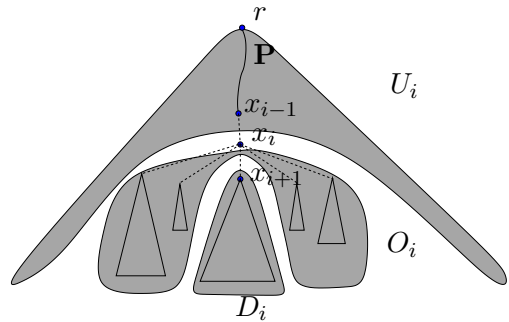}}
\caption{Partitioning of the shortest path tree $T_r$ at $x_i\in \textbf{P}$}
\label{three-parts}
\end{figure}
Consider vertex $x_i$ lying on the path $\textbf{P}(r,t)$. We partition
the tree $T_r\backslash \{x_i\}$ into the following 3 parts (see Figure \ref{three-parts}).
\begin{enumerate*}
\item $U_i$ : the tree $T_r$ after removing the subtree $T_r(x_i)$
\item $D_i$ : the subtree of $T_r$ rooted at $x_{i+1}$
\item $O_i$ : the portion of $T_r$ left after removing $U_i$, $x_i$,
and $D_i$.
\end{enumerate*}
Note that a vertex of the tree $T_r$ is either a vertex of the
path $\textbf{P}(r,t)$ or it belongs to some $O_i$ for some $i$.
We build the following two data-structures of total $O(n)$ size.
\begin{enumerate*}
\item
a data structure to retrieve 3-approximate shortest path from $r$ to any
$v\in D_i$.
\item
a data structure to retrieve 3-approximate shortest path from $r$ to any
$v\in O_i$.
\end{enumerate*}
\subsubsection{Data structure for 3-approximate shortest paths to vertices of
$D_i$ when $x_i$ has failed}
Consider the vertex $x_{i+1}$ and any other vertex $y\in D_i$. Note that the
shortest path $\textbf{P}(x_{i+1},y)$ remains intact even after removal of
$x_i$, and its length is certainly less than $\delta(r,y)$. Based on this simple
observation one can intuitively see that in order to travel from $r$ to $y$
when $x_i$ fails, we may travel along shortest route to $x_{i+1}$
(that is $\textbf{P}(r,x_{i+1},x_i)$) and then along $\textbf{P}(x_{i+1},y)$.
Using triangle inequality and the fact that the graph is undirected, the 
length of this path $\textbf{P}(r,x_{i+1},x_i)::\textbf{P}(x_{i+1},y)$ can
be approximated as follows.
\begin{eqnarray*}
\delta(r,x_{i+1},x_i)+\delta(x_{i+1},y)& \le &
\delta(r,y,x_i)+\delta(y,x_{i+1},x_i)+\delta(x_{i+1},y)\\
          & \le & \delta(r,y,x_i) + 2\delta(x_{i+1},y)\\
          & \le & \delta(r,y,x_i) + 2\delta(r,y) \le  3\delta(r,y,x_i)
\end{eqnarray*}
Therefore, in order to support retrieval of 3-approximate shortest path to any
$v\in D_i$ in optimal time, it suffices to
store the path $\textbf{P}(r,x_{i+1},x_i)$.

In order to devise ways of efficient computation and compact storage
of $\textbf{P}(r,x_{i+1},x_i)$ for a given $i$, we use the following lemma
about the structure of the path $\textbf{P}(r,x_{i+1},x_i)$.
\begin{lemma}
The shortest path $\textbf{P}(r,x_{i+1},x_i)$ is of the form $P_1::P_2$ where
$P_1$ is a shortest path from $r$ in the subgraph induced by $U_i\cup O_i$, and
$P_2$ is a path present in $D_i$.
\label{obs-2}
\end{lemma}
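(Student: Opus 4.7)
The plan is to decompose $P := \textbf{P}(r,x_{i+1},x_i)$ at the moment it first enters the subtree $D_i$, and then replace the portion inside $D_i$ by a tree path, exploiting the fact that $T_r$ restricted to $D_i$ is a shortest-path tree (rooted at $x_{i+1}$) for the underlying $G$-distances.

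Concretely, I would let $v$ be the first vertex along $P$ (scanning from $r$) that lies in $D_i$, and $u$ its predecessor on $P$. Since $P$ avoids $x_i$, every vertex before $v$ on $P$ lies in $V\setminus(D_i\cup\{x_i\})=U_i\cup O_i$; call this prefix $P_1$, so $P_1$ sits inside the subgraph induced by $U_i\cup O_i$ and ends at $u\in U_i\cup O_i$. Let $P_2'$ denote the suffix of $P$ from $v$ to $x_{i+1}$.

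Next, take $Q$ to be the $T_r$-path from $v$ up to $x_{i+1}$. Because $D_i$ is the subtree of $T_r$ rooted at $x_{i+1}$ and $v\in D_i$, the path $Q$ lies wholly inside $D_i$, and $|Q|=\delta(v,x_{i+1})$ since $T_r$ is a shortest-path tree of $G$. Now $P_2'$ is a $v$-to-$x_{i+1}$ path in $G$, hence $|P_2'|\ge\delta(v,x_{i+1})=|Q|$; conversely, $P_1::Q$ is a simple path from $r$ to $x_{i+1}$ avoiding $x_i$ (the two segments share no vertex because one lies in $U_i\cup O_i$ and the other in $D_i$), so by optimality of $P$ we also have $|P_1|+\omega(u,v)+|Q|\ge|P|=|P_1|+\omega(u,v)+|P_2'|$, i.e.\ $|Q|\ge|P_2'|$. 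Therefore $|Q|=|P_2'|$ and $P_1::Q$ is itself a shortest path $\textbf{P}(r,x_{i+1},x_i)$, giving the desired form with $P_2:=Q$. To conclude, if $P_1$ were not a shortest $r$-to-$u$ path inside the subgraph induced by $U_i\cup O_i$, a strictly shorter such path $P_1'$ would yield an even shorter $P_1'::Q$ avoiding $x_i$, contradicting minimality of $P$.

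The main obstacle is the substitution step: one has to be sure that replacing the (possibly meandering) suffix $P_2'$ by the tree path $Q$ neither increases the length nor reintroduces $x_i$ or creates a non-simple walk. The crucial structural observation that makes this free of cost is that within $D_i$ the shortest path tree $T_r$ realises the exact $G$-distances from $x_{i+1}$, so any detour of $P_2'$ outside $D_i$ can be shortcut back to a path lying entirely in $D_i$ without lengthening.
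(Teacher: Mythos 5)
Your proof is correct, and since the paper states Lemma~\ref{obs-2} without proof (treating it as routine), there is no explicit argument to compare against; yours fills the gap cleanly. The splitting at the first entry into $D_i$ is the natural choice, and the key move---replacing the possibly meandering suffix $P_2'$ by the tree path $Q$ in $T_r$ from $v$ to $x_{i+1}$---is exactly what is needed. You correctly justify $|Q|=\delta(v,x_{i+1})$ from the shortest-path-tree property of $T_r$ (the tree path from an ancestor $x_{i+1}$ to a descendant $v$ realizes the true $G$-distance), the two inequalities $|P_2'|\ge |Q|$ and $|Q|\ge |P_2'|$ force equality, disjointness of $U_i\cup O_i$ and $D_i$ guarantees simplicity of $P_1::Q$, and the exchange argument at the end establishes that $P_1$ is a shortest $r$-to-$u$ path within the induced subgraph on $U_i\cup O_i$. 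One small clarification worth making explicit: what you show is that \emph{some} shortest $r$-to-$x_{i+1}$ path avoiding $x_i$ has the claimed form (you may have to swap the tail of the originally fixed path for $Q$); since shortest paths need not be unique, that is the right reading of the lemma, and it is exactly what the algorithm requires. In fact your argument gives the slightly stronger and more useful conclusion that $P_2$ can be taken to be the $T_r$-tree path from the entry vertex to $x_{i+1}$, which is precisely what lets the paper store $\textbf{P}(r,x_{i+1},x_i)$ implicitly via the single crossing edge $(y_i,z_i)$ and the tree $T_r$.
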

It follows that in order to compute $\textbf{P}(r,x_{i+1},x_i)$, first we need
to compute shortest paths from $r$ in the subgraph induced by $U_i\cup O_i$.
Let $\delta_i(r,v)$ denote the distance from $r$ to
$v\in U_i\cup O_i$ in this subgraph.
Note that $\delta_i(r,v)$ for $v\in U_i$ and the corresponding
shortest path is the same as in the original graph, and is already present in
$T_r$. For computing shortest paths from $r$ to vertices of $O_i$,
we build a shortest path tree (denoted as $T_r(O_i)$) from $r$ in the subgraph
induced by vertices $O_i\cup\{r\}$ and the following additional edges. For each
$z \in O_i$ with at least one neighbor in $U_i$, we add an edge $(r,z)$ with
weight = $min_{(u,z)\in E, u \in U_i}(\delta(r,u) + \omega(u,z))$.
Applying Lemma 3.1, let $(y_i,z_i)$ be the edge of ${\mathbf P}(r,x_{i+1},x_i)$
joining the sub path present in $U_i\cup O_i$ with the sub path present in
$D_i$. This edge can be identified using the fact that this is the edge which
minimizes $\delta_i(r,y)+\omega(y,z)+\delta(x_{i+1},z)$ over all
$z\in D_i,y\in U_i\cup O_i,(y,z)\in E$. The vertex $x_{i+1}$ stores the path
$\textbf{P}(r,x_{i+1},x_i)$ implicitly by keeping the edge $(y_{i},z_{i})$ and
the tree $T_r(O_i)$. The shortest path $\textbf{P}(r,x_{i+1},x_i)$ can
be retrieved in optimal time using the trees $T_r$, $T_r(O_i)$, and the edge
$(y_{i+1},z_{i+1})$. Due to mutual disjointness of $O_i$'s, the overall space
requirement of the data structure for retrieving $\textbf{P}(r,x_{i+1},x_i)$
for all $i\le k$ will be $O(n)$.
%
%
\subsubsection{Data structure for 3-approximate shortest paths to vertices of
$O_i$ when $x_i$ has failed}
In order to compute 3-approximate shortest path to $O_i$ upon failure of $x_i$,
we shall use the approximate shortest paths to $D_i$ as computed above.
Here we use an interesting observation which states that if we have
a data structure to retrieve $\alpha$-approximate shortest paths
from $r$ to vertices of $D_i$ when $x_i$ fails, then we can use it to have a 
data-structure to retrieve $\alpha$-approximate shortest paths to vertices of 
$O_i$ as well. To prove this result,
this is how we proceed. Consider the subgraph induced by $O_i$ and augmented
with vertex $r$ and some extra edges which are defined as follows.
\begin{itemize}
\item
For each $o\in O_i$ having neighbors from $U_i$, keep an edge
$(r,o)$ and assign it weight =
$\min_{(u,o)\in E, u\in U_i}(\delta(r,u) + \omega(u,o))$.
\item
For each $o\in O_i$ having neighbors from $D_i$, keep an
edge $(r,o)$ and assign it weight = $\min_{(u,o)\in E, u\in D_i}(\hat{\delta}(r,u,x_i)+\omega(u,o))$, where $\hat{\delta}(r,u,x_i)$ is the $\alpha$-approximate 
distance to $u$ upon failure of $x_i$. 
(In the present situation we have $\alpha=3$.)
\end{itemize}
Let us denote this graph as $G_r(O_i)$. Observation \ref{generic-t} is based
on the following lemma which is easy to prove.
\begin{lemma}
The Dijkstra's algorithm from $r$ in the graph $G_r(O_i)$ computes
$\alpha$-approximate shortest paths from $r$ to all $v\in O_i$ avoiding $x_i$.
\label{O_i-lemma}
\end{lemma}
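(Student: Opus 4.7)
The plan is to establish two complementary properties of the auxiliary graph $G_r(O_i)$: \textbf{(S)} every $r$-to-$v$ path in $G_r(O_i)$ can be expanded into a genuine $r$-to-$v$ walk in $G \setminus \{x_i\}$ of no greater length, and \textbf{(C)} there exists an $r$-to-$v$ path in $G_r(O_i)$ of length at most $\alpha \cdot \delta(r,v,x_i)$. Since all edge weights in $G_r(O_i)$ are non-negative, Dijkstra returns a shortest path in $G_r(O_i)$; (S) then turns it into an honest walk avoiding $x_i$, and (C) bounds its length by $\alpha\,\delta(r,v,x_i)$, which is precisely the claim of the lemma.

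For (S), the only non-original edges in $G_r(O_i)$ are those of the form $(r,o)$ for $o \in O_i$. Each such edge has a natural witness in $G \setminus \{x_i\}$ supplied by the minimizer that defined its weight. If the minimizer is a neighbor $u \in U_i$, then $\textbf{P}(r,u)$ lies entirely in $U_i$ (since $U_i = T_r \setminus T_r(x_i)$) and hence avoids $x_i$, so the walk $\textbf{P}(r,u) :: (u,o)$ realizes the edge weight exactly. If the minimizer is $u \in D_i$, the data structure for $D_i$ constructed just above furnishes a walk from $r$ to $u$ in $G \setminus \{x_i\}$ of length $\hat{\delta}(r,u,x_i)$; appending $(u,o)$ again realizes the edge. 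Substituting each artificial edge in the Dijkstra output by its witness produces a walk in $G \setminus \{x_i\}$ of exactly the same total length.

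For (C), fix $v \in O_i$ and consider the true path $\textbf{P}(r,v,x_i)$. Walking from $r$, let $o$ be the first vertex on this path with the property that its entire suffix to $v$ lies in $O_i$, and let $u$ be the predecessor of $o$; since the full path avoids $x_i$, we must have $u \in U_i \cup D_i$ (with the convention $u = r \in U_i$ if $o$ is the immediate neighbor of $r$). Decompose $\delta(r,v,x_i) = L_1 + \omega(u,o) + L_3$, with $L_1$ the length of the prefix from $r$ to $u$ and $L_3$ the length of the suffix from $o$ to $v$. If $u \in U_i$, then $L_1 \geq \delta(r,u) = \delta(r,u,x_i)$ (this equality uses crucially that $U_i = T_r \setminus T_r(x_i)$, so the tree path from $r$ to $u$ avoids $x_i$), hence the artificial edge $(r,o)$ has weight $\leq L_1 + \omega(u,o)$ and following it with the suffix yields a $G_r(O_i)$-path of length $\leq \delta(r,v,x_i) \leq \alpha\,\delta(r,v,x_i)$. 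If $u \in D_i$, then $\hat{\delta}(r,u,x_i) \leq \alpha\,\delta(r,u,x_i) \leq \alpha L_1$, so the same construction produces a path of length $\leq \alpha L_1 + \omega(u,o) + L_3 \leq \alpha\,\delta(r,v,x_i)$. The main subtlety is the choice of the breakpoint $o$: it must be chosen so that the whole suffix uses only $O_i$-internal edges (i.e.\ original edges available in $G_r(O_i)$), while simultaneously forcing $u \notin O_i$ so that exactly the artificial edge $(r,o)$ we need is indeed present.
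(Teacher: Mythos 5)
Your proof is correct and supplies exactly the argument the paper is alluding to when it calls the lemma ``easy to prove'': soundness (every artificial edge $(r,o)$ has a witness walk in $G\setminus\{x_i\}$ of the same length, so a $G_r(O_i)$ shortest path expands to a genuine walk avoiding $x_i$) paired with completeness (splitting $\textbf{P}(r,v,x_i)$ at the first vertex $o$ whose entire suffix stays inside $O_i$, so that the predecessor $u$ lies in $U_i\cup D_i$ and is covered by the corresponding artificial edge). The choice of breakpoint, the case distinction $u\in U_i$ versus $u\in D_i$, and the use of $L_1\ge\delta(r,u,x_i)$ together with $\hat{\delta}(r,u,x_i)\le\alpha\,\delta(r,u,x_i)$ are all exactly what is needed, and no step fails.
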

\begin{obs}
If we can design a data structure for retrieving $(1+\epsilon)$-approximate
shortest paths from $r$ to vertices of $D_i$ upon failure of $x_i$, then it
can also be used to design a data structure which can support retrieval of
$(1+\epsilon)$-approximate shortest paths to all vertices of the graph
upon failure of $x_i$.
\label{generic-t}
\end{obs}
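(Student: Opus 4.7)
My plan is to piggyback on the three-part decomposition $T_r\setminus\{x_i\}=U_i\cup D_i\cup O_i$ already used in the 3-approximate construction, but now with the hypothesized $(1+\epsilon)$-approximate data structure for $D_i$ replacing the $3$-approximate one. Since every vertex of $V\setminus\{x_i\}$ lies in exactly one of these three sets, it suffices to give a $(1+\epsilon)$-approximate data structure for each class separately and then take their union.

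For any $v\in U_i$ the removal of $x_i$ leaves the $r$-to-$v$ path of $T_r$ intact, so $T_r$ itself already reports an exact shortest path. For $v\in D_i$ the hypothesized data structure is used directly. The only interesting case is $v\in O_i$, and here I would invoke Lemma \ref{O_i-lemma} with $\alpha=1+\epsilon$: construct the augmented graph $G_r(O_i)$ using the hypothesized $(1+\epsilon)$-approximate distances $\hat{\delta}(r,u,x_i)$ for $u\in D_i$ in place of the $3$-approximate ones, and then run Dijkstra's algorithm from $r$ in $G_r(O_i)$.

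The only delicate point, and hence the main obstacle, is verifying that Lemma \ref{O_i-lemma} is genuinely parameter-generic: its proof must nowhere rely on the specific value $\alpha=3$ but only on each $\hat{\delta}(r,u,x_i)$ fed into $G_r(O_i)$ being individually within a factor $\alpha$ of $\delta(r,u,x_i)$. Granted this, any $r$-to-$v$ path in $G\setminus\{x_i\}$ with $v\in O_i$ decomposes into segments that either stay inside $U_i\cup O_i$ or detour into $D_i$ and back; each such detour can be short-cut through a direct $r$-edge of $G_r(O_i)$ at a cost of at most $\alpha$ times the original subpath length, so Dijkstra's algorithm returns a distance at most $\alpha\cdot\delta(r,v,x_i)$. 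Once this genericity has been confirmed, the observation follows immediately by plugging in $\alpha=1+\epsilon$.
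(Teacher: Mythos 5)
Your proposal is correct and follows essentially the same route as the paper: decompose $V\setminus\{x_i\}$ into $U_i$ (unaffected, read from $T_r$), $D_i$ (use the hypothesized structure), and $O_i$ (run Dijkstra on the augmented graph $G_r(O_i)$ and appeal to Lemma~\ref{O_i-lemma}). The ``delicate point'' you flag is already resolved in the paper, since Lemma~\ref{O_i-lemma} is stated for a generic approximation factor $\alpha$ with $\alpha=3$ given only as a parenthetical instance, so plugging in $\alpha=1+\epsilon$ is immediate.
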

\noindent
We compute and store the shortest path tree rooted at $r$ in the graph
$G_r(O_i)$. This tree along with the tree $T_r$ and the data structure
described in the previous sub-section suffice for retrieval of 3-approximate
shortest paths to $o\in O_i$ upon failure of $x_i$.

\noindent
{\bf Query answering:}~
Suppose the oracle receives a query asking
for approximate shortest path from $r$ to $v$ avoiding
$x_i\in {\textbf P}(r,t)$. It first invokes lowest common ancestor (LCA) query
between $v$ and $x_i$ on $T_r$. If $LCA(v,x_i)\not=x_i$, the shortest path
from $r$ to $v$ remains unaffected and so it reports the path $\textbf{P}(r,t)$.
Otherwise, it determines if $v\in D_i$ or $v\in O_i$.
Depending upon the two cases, it reports the approximate shortest path between
$r$ and $v_i$ using one of the two data structures described above.
\begin{theorem}
An undirected weighted graph $G=(V,E)$, a source $r\in V$, and a shortest path
$P\in T_r$ can be processed in $O(m+n\log n)$ time to build a data
structure of $O(n)$ space which can report 3-approximate shortest
path from $r$ to any $v\in V$ avoiding any single failed vertex from $P$.
\label{path-theorem}
\end{theorem}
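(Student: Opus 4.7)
The plan is to assemble the two sub-data-structures sketched in Sections 3.1.1 and 3.1.2 into a single object and verify that the resulting construction meets all three bounds (space $O(n)$, preprocessing $O(m+n\log n)$, optimal-time 3-approximate retrieval). The main ingredients are already available: for each vertex $x_i$ on the given path $\textbf{P}(r,t)$, the excerpt has defined the partition $U_i,D_i,O_i$ of $T_r\setminus\{x_i\}$, has described how to encode the path $\textbf{P}(r,x_{i+1},x_i)$ by the tree $T_r(O_i)$ together with the bridging edge $(y_i,z_i)$ given by Lemma \ref{obs-2}, and has reduced the $O_i$ query to a single Dijkstra run on $G_r(O_i)$ via Lemma \ref{O_i-lemma}.

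The construction I would implement is the following. First compute $T_r$ by one invocation of Dijkstra's algorithm in $O(m+n\log n)$ time, and equip $T_r$ with an $O(n)$-time, $O(1)$-query LCA structure. Then, sweeping $i=0,1,\dots,k$, build the $D_i$-structure: scan the edges out of $O_i$ whose other endpoint lies in $U_i$ to produce the virtual edges $(r,o)$ of $T_r(O_i)$, run Dijkstra inside the graph induced by $O_i\cup\{r\}$ together with these virtual edges, and determine $(y_i,z_i)$ as the edge minimizing $\delta_i(r,y)+\omega(y,z)+\delta(x_{i+1},z)$ over all $z\in D_i$, $y\in U_i\cup O_i$ with $(y,z)\in E$, all three summands being already available. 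Next, build the $O_i$-structure by constructing $G_r(O_i)$, where the second family of virtual edges requires the 3-approximate distances $\hat\delta(r,u,x_i)$ for $u\in D_i$; these are precisely the values the just-built $D_i$-structure provides, and they are queried only at endpoints of edges crossing from $D_i$ into $O_i$. A second Dijkstra run on $G_r(O_i)$ then yields the SSSP tree used to answer $O_i$-queries.

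Space and time now fall out of the fact that the sets $O_i$ are pairwise disjoint and together partition $V\setminus\textbf{P}(r,t)$: the $T_r(O_i)$ tree, the SSSP tree inside $G_r(O_i)$, and the bridging edge $(y_i,z_i)$ each contribute $O(|O_i|+1)$ storage, summing to $O(n)$; similarly, every edge of $E$ is incident to vertices in at most one $O_i$, so $\sum_i(|E(O_i\cup\{r\})|+|O_i|\log|O_i|)=O(m+n\log n)$. Correctness for a query $(r,v,x_i)$ is dispatched by one LCA call on $T_r$: if $\mathrm{LCA}(v,x_i)\neq x_i$ then the tree path $\textbf{P}(r,v)$ is intact, otherwise $v$ lies in $D_i$ (handled by the three-line triangle-inequality estimate displayed before Lemma \ref{obs-2}) or in $O_i$ (handled by Lemma \ref{O_i-lemma} with $\alpha=3$).

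The main obstacle I anticipate is the preprocessing charging argument: the virtual edges out of $r$ in $T_r(O_i)$ and in $G_r(O_i)$ have weights defined as minima over neighbors in $U_i$ or $D_i$, and one must be sure every original edge of $E$ is inspected only a constant number of times across all values of $i$. Disjointness of the $O_i$'s takes care of this, but it has to be checked alongside the additional work of pulling the 3-approximate $D_i$-distances out of the $D_i$-structure when assigning weights to the $D_i$-side virtual edges of $G_r(O_i)$. Once this bookkeeping is verified, the total preprocessing matches that of a single Dijkstra call, and the theorem follows.
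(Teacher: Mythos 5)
Your construction and query procedure coincide with the paper's: this theorem is precisely the assembly of the data structures of Sections 3.1.1 and 3.1.2, and your $O(n)$ space bound and LCA-based query dispatch are argued correctly. The genuine gap is in the preprocessing-time charge. Disjointness of the $O_i$'s does bound the work of building the trees $T_r(O_i)$ and running Dijkstra on the graphs $G_r(O_i)$, since every vertex lies in at most one $O_i$ and each original edge therefore participates in the virtual-edge minima for $O(1)$ values of $i$. But the bridging edge $(y_i,z_i)$ is a minimum taken over \emph{all} edges from $U_i\cup O_i$ into $D_i$, and the contribution of edges with $y\in U_i$ and $z\in D_i$ is not charged to any $O_i$. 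Because the sets $U_0\subset U_1\subset\cdots$ and $D_0\supset D_1\supset\cdots$ are nested rather than disjoint, a single non-tree edge $(y,z)$ with $y$ hanging off $x_a$ and $z$ hanging off $x_b$ (with $a<b$) is a $U_i$-to-$D_i$ crossing edge for every $i$ with $a\le i<b$; recomputing the minimum from scratch at each $i$ costs $\sum_i|\{(y,z)\in E: y\in U_i,\ z\in D_i\}|$, which is already $\Theta(n^2)$ when $T_r$ degenerates to a long path, far above $O(m+n\log n)$. This is exactly the step for which the paper cites Nardelli, Proietti and Widmayer: one sweeps over $i$ and maintains the crossing-edge minimum incrementally (each edge enters and leaves the candidate pool once, and the relevant key $\delta(r,y)+\omega(y,z)+\delta(r,z)$ is $i$-independent), instead of rescanning. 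Your write-up needs that maintenance argument; an appeal to $O_i$-disjointness does not cover the $U_i$-to-$D_i$ edges.
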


\subsection{Handling the failure of any vertex in $T_r$}
We follow divide and conquer strategy based on the following simple lemma.
\begin{lemma}
There exists an $O(n)$ time algorithm to compute a path $P$ in $T_r$
whose removal splits $T_r$ into a collection of disjoint subtrees
$T_r(v_1),...T_r(v_j)$ such that
\begin{itemize}
\item $|T_r(v_i)| < n/2$ for each $i\le j$.
\item $P \cup_i T_r(v_i) = T$ and $P\cap T_r(v_i)=\emptyset ~~~\forall i$.
\end{itemize}
\label{tree-partition}
\end{lemma}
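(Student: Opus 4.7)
The plan is to take $P$ to be the root-to-leaf \emph{heavy path} of $T_r$, constructed greedily. First I would compute $s(v) := |T_r(v)|$ for every vertex in $O(n)$ time by a single post-order traversal of $T_r$. Then, starting from $u_0 = r$, I iteratively extend the path by moving from the current vertex $u_i$ to the child of $u_i$ with the largest value of $s(\cdot)$ (ties broken arbitrarily), stopping when $u_i$ has no children. Each step examines the children of $u_i$ only once, so the total cost of building $P$ is $O(n)$.

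The subtrees $T_r(v_1), \ldots, T_r(v_j)$ promised by the statement are then taken to be the subtrees rooted at all children of vertices of $P$ that do not themselves lie on $P$; equivalently, they are the connected components of $T_r \setminus P$, each rooted at its shallowest vertex. By construction they are pairwise disjoint, disjoint from $P$, and together with $P$ cover every vertex of $T_r$, which gives the two bullets $P \cap T_r(v_i) = \emptyset$ and $P \cup \bigcup_i T_r(v_i) = T_r$.

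The only real content is the size bound $|T_r(v_i)| < n/2$, and this is where the greedy rule has to pay off. Fix such a $v_i$, and let $u$ be its parent on $P$ and let $u'$ be the child of $u$ that was chosen to continue $P$. By the greedy rule $s(u') \geq s(v_i)$, and since $u'$ and $v_i$ are distinct children of $u$,
\[ s(u) \;\geq\; 1 + s(u') + s(v_i) \;\geq\; 1 + 2\,s(v_i). \]
Combined with $s(u) \leq n$ this yields $s(v_i) \leq (n-1)/2 < n/2$. The main thing to be careful about is precisely this choice rule: picking anything weaker than the child of maximum subtree size can easily leave a subtree of size close to $n$ hanging off $P$, so the argument really does require always descending into the heaviest child. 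Everything else is a routine linear-time tree traversal.
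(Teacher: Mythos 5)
Your proposal is correct and uses the canonical heavy-path argument: always descend into the child of maximum subtree size, which forces every off-path subtree to be strictly smaller than its sibling on the path and hence of size less than $n/2$. The paper states Lemma~\ref{tree-partition} without proof, but this is exactly the standard construction it relies on, and your size and running-time analyses are both sound.
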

First we compute the path $P\in T_r$ as mentioned in Lemma
\ref{tree-partition}. We build the data structure for handling failure of any
vertex from $P$ by executing the algorithm of Theorem \ref{path-theorem}.
Let $v_1,...,v_j$ be the roots of the sub trees of $T_r$ connected to the path
$P$ with an  edge. For each $1\le i\le j$, we solve the problem recursively on the subgraph
$G_r(v_i)$, and build the corresponding data structures.
Lemma \ref{tree-partition} and Theorem \ref{path-theorem} can be used in
straight forward manner to prove the following theorem.
\begin{theorem}
An undirected weighted graph $G=(V,E)$ can be processed in
$O(m\log n +n \log^2 n)$ time to build a data structure of size
$O(n\log n)$ which can answer, in optimal time,
any 3-approximate shortest path query from a given source $r$ to any vertex
$v\in V$ avoiding any single failed vertex.
\label{tree-theorem}
\end{theorem}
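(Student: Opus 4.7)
The plan is to prove Theorem \ref{tree-theorem} by induction on $n$, peeling off one path at each level via Lemma \ref{tree-partition} and dispatching single-path failures to the oracle of Theorem \ref{path-theorem}.

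Construction. At a recursive instance of size $n$ with source $r$, first build $T_r$ and apply Lemma \ref{tree-partition} to extract, in $O(n)$ time, a path $P \subseteq T_r$ whose removal leaves hanging subtrees $T_r(v_1),\ldots,T_r(v_j)$ of size $< n/2$ each. Invoke Theorem \ref{path-theorem} on $(G,r,P)$ to build, in $O(m + n\log n)$ time and $O(n)$ space, an oracle $\mathcal{D}_P$ answering queries whose failed vertex lies on $P$. For every $i$, recursively build the oracle on $G_r(v_i)$, whose auxiliary edges from the new source encode the cheapest way to enter $T_r(v_i)$ from the rest of $G$. Each vertex of $V\setminus\{r\}$ is placed on the extracted path of exactly one recursive instance; I store at the vertex a pointer to the corresponding $\mathcal{D}_P$, which adds only $O(n)$ auxiliary space.

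Query answering. On a query $(v,x)$, first issue an LCA query on $T_r$. If $x$ is not an ancestor of $v$, the $(r,v)$ path in $T_r$ avoids $x$ and is returned unchanged. Otherwise, retrieve via the stored pointer the unique recursive instance $\mathcal{I}(x)$ at which $x$ was placed on the extracted path, and forward the query to the $\mathcal{D}_P$ of that instance. Correctness follows by induction: at the top level, if $x \notin P$ then $x$ lies in some $T_r(v_i)$, and because $x$ is an ancestor of $v$ in $T_r$ the target $v$ lies in the same $T_r(v_i)$, so the recursive call on $G_r(v_i)$ returns the correct 3-approximate distance; the definition of $G_r(v_i)$ (via the boundary edges from the new source) precisely matches the best way a shortest $(r,v)$ path in $G\setminus\{x\}$ can enter $T_r(v_i)$.

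Analysis. The size recurrence $S(n) \le c\, n + \sum_i S(n_i)$ with $\sum_i n_i \le n$ and $\max_i n_i < n/2$ solves to $S(n) = O(n\log n)$. For preprocessing time, observe that the sub-instances at any fixed recursion depth are vertex-disjoint, and that when forming $G_r(v_i)$ each original edge of $G$ is either inherited as an internal edge of exactly one child or else contributes (together with other edges sharing its subtree endpoint) to one auxiliary edge from the new source of that child, so the edges over all instances of a single depth total $O(m+n)$. Each of the $O(\log n)$ levels therefore costs $O(m + n\log n)$, summing to $O(m\log n + n\log^2 n)$.

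The main obstacle I anticipate is the combined correctness-and-accounting step for the recursive reduction: one must verify both that passing to $G_r(v_i)$ with its augmented source preserves the $\le 3$ stretch with respect to the original $\delta(r,v,x)$ in $G$, and that the auxiliary edges introduced in each $G_r(v_i)$ do not inflate the edge count across levels beyond $O(m+n)$ per level. Once these two points are pinned down, the theorem follows from a clean induction using Theorem \ref{path-theorem} at each level.
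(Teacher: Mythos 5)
Your proposal matches the paper's proof: the paper likewise obtains Theorem~\ref{tree-theorem} by repeatedly applying Lemma~\ref{tree-partition} to extract a path $P$, invoking Theorem~\ref{path-theorem} for failures on $P$, and recursing on each $G_r(v_i)$, with the same $O(n\log n)$ space and $O(m\log n + n\log^2 n)$ time recurrences across $O(\log n)$ levels of vertex-disjoint sub-instances. The paper states this is ``straightforward'' and gives no further detail; your fleshed-out version, including the per-level edge accounting and the LCA-based dispatch, fills it in correctly.
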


\section{Single source (1+$\epsilon$)-approximate shortest paths avoiding a
failed vertex}

In this section, we shall present a compact data structure for single source
$(1+\epsilon)$-approximate shortest paths avoiding a failed vertex in
an unweighted graph.
Let $level(v)$ denote the level (or distance from $r$) of vertex $v$ in the tree $T_r$.  Let $U_x,D_x,O_x$ denote the partitions of the
tree $T_r$ formed by deletion of vertex $x$, with the same meaning as that of $U_i,D_i,O_i$
defined for $x_i$ in the previous section.
On the basis of Observation \ref{generic-t},
our objective is to build a compact data structure
which will support retrieval of $(1+\epsilon)$-approximate
shortest-paths to vertices of $D_x$ upon failure of $x$ for any
$x\in V$.
Let ${\tt uchild}(x)$ denote the root of the subtree corresponding to
$D_x$ (it is similar to $x_{i+1}$ in case of $D_i$). For reporting approximate
distance between $r$ and $v\in D_x$ when $x$ fails, the data structure of
previous section reports path of length
$\delta(r,{\tt uchild}(x),x) + \delta({\tt uchild}(x),v)$ which is bounded by
$\delta(r,v,x) + 2\delta({\tt uchild}(x),v)$. It should be noted that the
approximation factor associated with it is already bounded by $(1+\epsilon)$
for any $\epsilon>0$ if the following condition holds.

\centerline{ $\textbf{C}$ : ${\tt uchild}(x)$ is {\em close} to $v$, that is,
$\delta({\tt uchild}(x),v) \le \frac{\epsilon}{2} \delta(r,v)$.}

We shall build a supplementary data structure which will ensure that whenever
the condition $\textbf{C}$ does not hold, there will be some ancestor $w$
of $v$ lying on $\textbf{P}(x,v)$, called a {\em special} vertex, satisfying the
following two properties.
\begin{enumerate}
\item
$\delta(w,v) \ll \delta(r,v)$, that is $w$ is much closer to $v$ than $r$.
\item
vertex $w$ stores approximate shortest path to $r$ avoiding $x$
(with the approximation factor arbitrarily close to 1).
\end{enumerate}
We shall refer to such vertices $w$ as special-vertices.
%
\subsection{Constructing the set of special vertices}

Let $h$ be the height of BFS tree rooted at $r$.
Let  $L$ be a set of integers such that
$L=\{i|\lfloor{(1 + {\epsilon})}^i \rfloor < h \}$.
For a given $i\in L$, we define a subset $S_i$ of special vertices as $S_i = \{
u \in V|
level(u)=\lfloor{(1+{\epsilon})}^i\rfloor \wedge |T_r(u)|\geq{\epsilon} level(u)
\}$. We define the set of special vertices as $S = \cup_{\forall i \in L}S_i$. In
addition, we also introduce the following terminologies.
\begin{itemize}
\item
$S(v)$:~ the nearest ancestor of $v$ which belongs to set $S$.
\item
$V(u)$:~ For a vertex $u\in S$, $V(u)$ denotes the set of vertices $v\in V$ with
$S(v)=u$. In essence, the vertex $u$ will serve as the special vertex for each
vertex from $V(u)$. For failure of any vertex $x\in \textbf{P}(r,u)$, each
vertex of set $V(u)$ will query the data structure stored at $u$ for retrieval
of approximate shortest path/distance from the source.
\end{itemize}

We now state two simple lemmas based on the above construction.
\begin{lemma}
Let $v\in V \backslash S$, then
$\delta(v,S(v)) \leq \big(\frac{2{\epsilon}}{1+{\epsilon}}\big)level(v)$
if ${\epsilon} < 1$
\label{prop1}
\end{lemma}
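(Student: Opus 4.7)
The plan is to identify an ancestor of $v$ at a power-of-$(1+\epsilon)$ level whose subtree is guaranteed to be large enough, and then to show that this ancestor cannot sit too many levels above $v$. The crucial observation is that if $w$ is any ancestor of $v$ in $T_r$, then $T_r(w)$ contains the entire tree path from $w$ down to $v$, so $|T_r(w)| \geq level(v) - level(w) + 1$. Consequently, the subtree-size condition $|T_r(w)| \geq \epsilon \cdot level(w)$ required for membership in $S$ is \emph{automatically} satisfied whenever $level(w) \leq \frac{level(v)+1}{1+\epsilon}$.

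Concretely, I would pick $i^{\star}$ to be the largest integer for which $\lfloor (1+\epsilon)^{i^{\star}} \rfloor \leq \frac{level(v)+1}{1+\epsilon}$, and let $u$ be the ancestor of $v$ at level $\lfloor (1+\epsilon)^{i^{\star}} \rfloor$. By the observation above, $u \in S_{i^{\star}} \subseteq S$, so $level(S(v)) \geq \lfloor (1+\epsilon)^{i^{\star}} \rfloor$ because $S(v)$ is defined as the nearest ancestor of $v$ in $S$. From the maximality of $i^{\star}$, namely $\lfloor (1+\epsilon)^{i^{\star}+1} \rfloor > \frac{level(v)+1}{1+\epsilon}$, I would deduce $(1+\epsilon)^{i^{\star}} > \frac{level(v)+1}{(1+\epsilon)^{2}}$, and hence $\lfloor (1+\epsilon)^{i^{\star}} \rfloor > \frac{level(v)+1}{(1+\epsilon)^{2}} - 1$. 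Plugging this into $\delta(v, S(v)) = level(v) - level(S(v)) \leq level(v) - \lfloor (1+\epsilon)^{i^{\star}} \rfloor$ and simplifying yields an upper bound of the form $level(v) \cdot \frac{2\epsilon+\epsilon^{2}}{(1+\epsilon)^{2}}$ plus rounding slack, which is at most $\frac{2\epsilon}{1+\epsilon}\, level(v)$ thanks to the elementary inequality $\frac{2\epsilon+\epsilon^{2}}{(1+\epsilon)^{2}} \leq \frac{2\epsilon}{1+\epsilon}$ (equivalent to $\epsilon^2 \leq 2\epsilon^2$).

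The main obstacle I anticipate is absorbing the additive $1$ introduced by the two floor functions, which can dominate when $level(v)$ is very small. I would clean this up by observing that $\epsilon < 1$ forces every level-$1$ vertex into $S_0$ (since $|T_r(w)| \geq 1 > \epsilon \cdot 1$), so any $v \in V \setminus S$ has $level(v) \geq 2$ and always has its level-$1$ ancestor available as a fallback element of $S$; the handful of remaining small-$level(v)$ regimes (where the power levels near $level(v)$ are at most a constant apart, and leaves are the only way to fail the subtree condition) can be dispatched directly, since in each such case one of the nearby power-level ancestors trivially inherits a subtree containing the path down to $v$ and thus lies in $S$ at distance at most $\frac{2\epsilon}{1+\epsilon}\, level(v)$ from $v$.
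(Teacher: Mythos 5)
Your key observation is correct and, I believe, exactly the right starting point: since $T_r(w)$ for an ancestor $w$ of $v$ contains the whole tree path from $w$ down to $v$, one has $|T_r(w)| \geq level(v) - level(w) + 1$, so the subtree-size requirement $|T_r(w)| \geq \epsilon\, level(w)$ is automatic once $level(w) \leq \frac{level(v)+1}{1+\epsilon}$. Picking the largest power level $\lfloor(1+\epsilon)^{i^\star}\rfloor$ at or below that threshold and bounding the distance from $v$ to it is the natural continuation. (The paper itself states Lemma~\ref{prop1} without proof, calling it ``simple,'' so there is no written argument to compare against; I am evaluating your proof on its own terms.)

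The gap is in how you propose to absorb the floor corrections. Carrying your inequalities through carefully, the bound you actually obtain is
$\delta(v,S(v)) < (level(v)+1)\cdot\frac{\epsilon(2+\epsilon)}{(1+\epsilon)^2}$,
and a direct comparison shows this is at most $\frac{2\epsilon}{1+\epsilon}\,level(v)$ precisely when $level(v) \geq 1 + 2/\epsilon$. You recognize that small $level(v)$ is the problem, but the fallback you propose does not cover it. The observation that every level-$1$ vertex lies in $S_0$ gives $\delta(v,S(v)) \leq level(v) - 1$, and this is at most $\frac{2\epsilon}{1+\epsilon}\,level(v)$ only when $level(v) \leq \frac{1+\epsilon}{1-\epsilon}$; for $\epsilon$ small that is roughly $1+2\epsilon$, whereas the regime your main bound fails to cover extends up to $level(v) \approx 1 + 2/\epsilon$, which is a wide range of levels. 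Vertices $v\notin S$ really do exist throughout that range (e.g.\ a leaf sitting at a power level $> 1/\epsilon$ fails the subtree condition), and your sentence that these cases ``can be dispatched directly'' is not a proof: the number of residual levels is $\Theta(1/\epsilon)$, not a constant, and for each of them you still need to exhibit a power level inside $\bigl[\tfrac{(1-\epsilon)\,level(v)}{1+\epsilon},\ \tfrac{level(v)+1}{1+\epsilon}\bigr]$, which does not follow from anything you have written. So the structural idea is right, but the proof as given is incomplete for $level(v) < 1 + 2/\epsilon$, and the proposed patch for that range does not work.
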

\begin{lemma}
\label{prop2}
Let $u$ be a vertex at level $\ell$ and $u\in S$. Then
$V(u) \ge {\epsilon} \ell$.
\end{lemma}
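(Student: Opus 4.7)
The plan is to argue that $V(u)$ contains a large portion of the subtree $T_r(u)$ by exploiting two facts: first, that no special vertex strictly between $u$ and its descendants can live in a narrow band of levels just below $\ell$; and second, that in a BFS tree every level of $T_r(u)$ below the root is actually populated. Concretely, since $u \in S$ there is some index $i$ with $\ell = \lfloor(1+\epsilon)^i\rfloor$ and $|T_r(u)| \ge \epsilon\ell$ by the definition of $S_i$.

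The key observation is that $S$-vertices can only occur at levels of the form $\lfloor(1+\epsilon)^j\rfloor$, so the first level strictly below $u$ at which $S$ can be nonempty is $\lfloor(1+\epsilon)^{i+1}\rfloor$. Consequently, for any descendant $v \in T_r(u)$ with $\text{level}(v) < \lfloor(1+\epsilon)^{i+1}\rfloor$, the unique root-to-$v$ path contains no $S$-vertex strictly between $u$ and $v$, so $S(v) = u$ and therefore $v \in V(u)$. This gives the sufficient inclusion
\[
V(u) \;\supseteq\; \bigl\{\,v \in T_r(u)\;:\;\text{level}(v) < \lfloor(1+\epsilon)^{i+1}\rfloor\,\bigr\}.
\]

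Next I would bound the size of the right-hand side. Because $T_r$ is a BFS tree, each vertex of $T_r(u)$ at depth $d$ (relative to $u$) forces, via the unique path back to $u$, one vertex at every depth $0,1,\dots,d$; hence $T_r(u)$ contains at least one vertex at every depth from $0$ up to its maximum depth $D$, giving $|T_r(u)| \ge D+1$. Setting $d := \lfloor(1+\epsilon)^{i+1}\rfloor - \ell$, the number of descendants of $u$ at levels in $[\ell, \ell+d-1]$ is therefore at least $\min\bigl(d,\,|T_r(u)|\bigr)$. Since $|T_r(u)| \ge \epsilon\ell$ by hypothesis, the remaining task is to verify $d \ge \epsilon\ell$, which follows from the geometric growth
\[
\lfloor(1+\epsilon)^{i+1}\rfloor - \lfloor(1+\epsilon)^i\rfloor \;\ge\; \epsilon(1+\epsilon)^i - 1 \;\ge\; \epsilon\ell - 1.
\]

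I expect the main (minor) obstacle to be precisely this last floor-induced off-by-one: the bare geometric estimate yields $\epsilon\ell-1$ rather than $\epsilon\ell$. I would absorb this either by tightening the depth range (e.g.\ including level $\lfloor(1+\epsilon)^{i+1}\rfloor$ itself whenever no $S$-vertex on the relevant root-to-$v$ path sits exactly at that level, which is the generic case since $S_{i+1}$-membership also requires a large subtree) or by assuming, as is standard for these Thorup--Zwick-style constructions, that the lemma should be read up to additive constants so that the bound $|V(u)| = \Omega(\epsilon\ell)$ is what is ultimately used in the size analysis.
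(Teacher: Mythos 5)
The paper omits the proof of this lemma (it is merely introduced as one of ``two simple lemmas''), but your argument is the natural one and almost surely matches what the authors intend: every $v\in T_r(u)$ at a level strictly between $\ell$ and the next candidate special level $\lfloor(1+\epsilon)^{i+1}\rfloor$ must have $S(v)=u$, and since a rooted tree has at least one vertex at every depth from $0$ up to its height, the count is at least $\min\bigl(d,|T_r(u)|\bigr)\ge\min(d,\epsilon\ell)$ with $d=\lfloor(1+\epsilon)^{i+1}\rfloor-\lfloor(1+\epsilon)^i\rfloor$.

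You are also right to flag the floor-induced off-by-one: $d$ can be as small as $\lfloor\epsilon\ell\rfloor$, so the argument literally gives $|V(u)|\ge\lfloor\epsilon\ell\rfloor$ rather than $\ge\epsilon\ell$. This is a genuine (if cosmetic) imprecision in the lemma as stated, not a defect in your reasoning. It is harmless downstream: the lemma is invoked only to bound $\sum_{u\in S}\mathop{\mathrm{level}}(u)$ via the disjointness $\sum_{u\in S}|V(u)|\le n$, and $|V(u)|\ge\epsilon\,\mathop{\mathrm{level}}(u)-1$ still yields $\sum_{u\in S}\mathop{\mathrm{level}}(u)=O(n/\epsilon)$, so the $O\!\left(n\frac{\log n}{\epsilon^3}\right)$ size bound of Theorem~\ref{main-theorem-1+epsilon} is unaffected. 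Your proposed patches (absorbing the slack by including level $\lfloor(1+\epsilon)^{i+1}\rfloor$ when it carries a non-special vertex, or reading the bound as $\Omega(\epsilon\ell)$) are both reasonable; the second is what the analysis actually needs.
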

If we can ensure that the data structure for a special vertex $u$ (for 
retrieving approximate shortest paths from $r$ upon failure of any 
$x \in \textbf{P}(r,u)$) is of size $O(level(u))$, then it would follow from 
Lemma \ref{prop2} that the space required by our supplementary data structure 
will be linear in $n$.
\subsection{The data structure for a special vertex}
Consider a special vertex $v$ with $level(v) = \lfloor (1+\epsilon)^i \rfloor$
We shall now describe a compact data structure stored at $v$ which will
facilitate retrieval of approximate shortest path from $r$ to $v$ upon failure
of any vertex $x\in \textbf{P}(r,v)$.

Let $v'$ be the special vertex which is present at level
$\lfloor (1+\epsilon)^{i-1}\rfloor$ and is ancestor of $v$.
The data structure stored at $v$ will be defined in a way that will prevent it 
from storing information that is already present in the data structure of some 
special vertex lying on $\textbf{P}(r,v')$.

If $x \in \textbf{P}(v',v)$, then the data structure described in the
previous section itself stores a path which is $(1+2\epsilon)$-approximation
of $\textbf{P}(r,v,x)$.

Let us now consider the nontrivial case when $x\in \textbf{P}(r,v'), x\not=v'$.
In order to discuss this case, we would like to introduce the notion of
{\em detour}. To understand it, let us visualize the paths
$\textbf{P}(r,v,x)$ and $\textbf{P}(r,v)$ simultaneously. Since $\textbf{P}(r,v,x)$ and $\textbf{P}(r,v)$ have the same end-points and $x$ doesn't lie on $\textbf{P}(r,v,x)$, there must be a  {\em middle} portion
of $\textbf{P}(r,v,x)$ which intersects $\textbf{P}(r,v)$ at exactly two vertices, and the remaining portion of $\textbf{P}(r,v,x)$ overlaps with $\textbf{P}(r,v)$. This {\em middle} portion
is called a detour. We now define it more formally. Let $a$ and $b$ be two vertices on the shortest path $\textbf{P}(r,v)$.  We use $a\prec b$ to denote that vertex $a$ is closer to $r$ than
vertex $b$. The notation $a\preceq b$
would mean that either $a\prec b$ or $a=b$. .
So here is the definition of detour (and the underlying observation).
\begin{definition}
Let $x\in \textbf{P}(r,y)$. When $x$ fails, the path $\textbf{P}(r,y,x)$ will be
of the form of $\textbf{P}(r,a)::p_{a,b}::\textbf{P}(b,y)$, where
$r\preceq a \prec x\prec b\preceq y$ and the path $p_{a,b}$ is such that
$p_{a,b}\cap \textbf{P}(a,b)=\{a,b\}$. In other words, $p_{a,b}$ meets
$\textbf{P}(a,b)$ only at the end points. We shall call $p_{a,b}$ as the detour
associated with the shortest path
$\textbf{P}(r,y,x)$.
\end{definition}

Let $p_{a,b}$ represent the detour w.r.t. to $\textbf{P}(r,v,x)$.
The handling of failure of vertices $x\in \textbf{P}(r,v)$ which lie above $v'$
would depend upon the detour $p_{a,b}$. This detour can be of any of the
following types (see Figure  \ref{fig:detour-type} for illustration).
\begin{itemize}
\item I : $b \preceq v'$.
\item II : $v' \prec b$.
\end{itemize}
\begin{figure}[h]
\psfrag{l}{$\ell_0$}
\psfrag{x}{$x$}
\psfrag{x1}{$a$}
\psfrag{x2}{$b$}
\psfrag{v'}{$v'$}
\psfrag{v}{$v$}
\psfrag{r}{$r$}
\psfrag{l1}{}
\psfrag{l2}{$\lfloor(1+\epsilon)^i\rfloor$}
\psfrag{px}{$p_{a,b}$}
\psfrag{a}{(i)}
\psfrag{b}{(ii)}
\centerline{\epsfysize=110pt \epsfbox{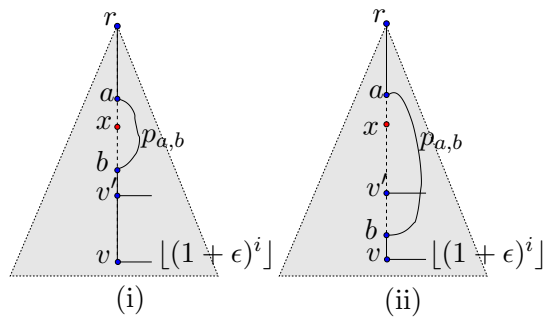}}
\caption{$p_{a,b}$ is shortest detour of $\textbf{P}(r,v,x)$.
(i) : detour of type I, (ii) : detour of type II}
\label{fig:detour-type}
\end{figure}

Handling detours of type I is relatively easy. Let $w$ be the farthest
ancestor of $v$ such that $w \in S$ and level of $w$ is greater or equal to
the level of $b$. In this case, $v$ stores the corresponding detour implicitly
by just keeping a pointer to the vertex $w$.

Handling detours of type II is slightly tricky since we can't afford to store
each of them explicitly. However, we shall employ the following observation
associated with the detours of type II to guarantee low space requirement.
\begin{obs}
Let ${\alpha}_1$, ${\alpha}_2$,\ldots,${\alpha}_t$ be the vertices on $\textbf{P}(r,v)$
(in the increasing order of their levels) such that the shortest detour
corresponding to $\textbf{P}(r,v,{\alpha}_i)$ is of type II $\forall i$, then
\[
\delta(r,v,{\alpha}_1) \ge \delta(r,v,{\alpha}_2)\ge \cdots\ge \delta(r,v,{\alpha}_t)
\]
\end{obs}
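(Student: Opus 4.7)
The plan is to reduce to consecutive pairs, showing $\delta(r,v,\alpha_i) \ge \delta(r,v,\alpha_{i+1})$ for every $i<t$; the full chain then follows by transitivity. The main idea is that the type II hypothesis forces the various vertices to lie in such a tightly constrained order on $\textbf{P}(r,v)$ that the shortest detour associated with $\alpha_i$ can be reused verbatim as a (not necessarily optimal) detour avoiding $\alpha_{i+1}$.

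First I would pin down the relative positions on $\textbf{P}(r,v)$. The case analysis under discussion considers only failures $x \in \textbf{P}(r,v')$ with $x \ne v'$, so every $\alpha_i$ satisfies $\alpha_i \prec v'$. Writing $p_{a_i,b_i}$ for the shortest detour corresponding to $\textbf{P}(r,v,\alpha_i)$, the definition of a detour yields $a_i \prec \alpha_i \prec b_i$, and the type II hypothesis adds $v' \prec b_i$. Combining these on $\textbf{P}(r,v)$ gives the order $a_i \prec \alpha_i \prec v' \prec b_i$. Applying the same analysis to $\alpha_{i+1}$ and using that the sequence is listed by increasing level (so $\alpha_i \prec \alpha_{i+1}$), I would arrive at the refined total order $a_i \prec \alpha_i \prec \alpha_{i+1} \prec v' \prec b_i$ on $\textbf{P}(r,v)$. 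In particular, $\alpha_{i+1}$ is a \emph{strictly internal} vertex of the tree path $\textbf{P}(a_i,b_i)$.

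Next I would verify that the realization $\textbf{P}(r,v,\alpha_i) = \textbf{P}(r,a_i) :: p_{a_i,b_i} :: \textbf{P}(b_i,v)$ itself avoids $\alpha_{i+1}$. The prefix $\textbf{P}(r,a_i)$ sits at levels at most $level(a_i) < level(\alpha_{i+1})$, so $\alpha_{i+1}$ is not on it; symmetrically, $\textbf{P}(b_i,v)$ sits at levels at least $level(b_i) > level(\alpha_{i+1})$; and by the defining property of a detour, $p_{a_i,b_i}$ meets $\textbf{P}(a_i,b_i)$ only at the endpoints $a_i,b_i$, so it cannot pass through the internal vertex $\alpha_{i+1}$. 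Hence this is a path in $G \setminus \{\alpha_{i+1}\}$ of length $\delta(r,v,\alpha_i)$, yielding $\delta(r,v,\alpha_{i+1}) \le \delta(r,v,\alpha_i)$, as desired.

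I do not anticipate a serious technical obstacle; the argument is really just careful bookkeeping of the relative order of $a_i,\alpha_i,\alpha_{i+1},v',b_i$ along $\textbf{P}(r,v)$. The one subtlety worth flagging is that ``shortest detour of type II'' is defined only for failed vertices lying strictly above $v'$, and this is exactly what licenses the very first inequality $\alpha_i \prec v'$; dropping that hypothesis would break the cleanly ordered picture above and the detour $p_{a_i,b_i}$ would no longer automatically skip over $\alpha_{i+1}$.
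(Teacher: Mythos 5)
Your argument is correct, and since the paper states this observation without supplying a proof, the comparison is necessarily against what a natural argument would look like — yours is it. The heart of the matter is exactly what you identify: under the standing case assumption $\alpha_i, \alpha_{i+1} \in \textbf{P}(r,v')$, $\alpha_i, \alpha_{i+1} \ne v'$, together with $\alpha_i \prec \alpha_{i+1}$ and the type~II condition $v' \prec b_i$, the vertex $\alpha_{i+1}$ is pinned strictly between $a_i$ and $b_i$ on $\textbf{P}(r,v)$. Consequently $\textbf{P}(r,a_i)$ and $\textbf{P}(b_i,v)$ are subpaths of $\textbf{P}(r,v)$ on the two sides of $\alpha_{i+1}$ (your level argument shows this, though one could equally just say they are prefix and suffix segments not containing $\alpha_{i+1}$), and $p_{a_i,b_i}$ misses $\alpha_{i+1}$ by the defining property of a detour. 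So the witness path for $\delta(r,v,\alpha_i)$ lives in $G\setminus\{\alpha_{i+1}\}$, giving $\delta(r,v,\alpha_{i+1}) \le \delta(r,v,\alpha_i)$, and the chain follows. Your flag at the end is the right one to raise: the implicit hypothesis $\alpha_i \prec v'$ (coming from the surrounding case split in the text) is essential, since without $\alpha_{i+1} \prec v' \prec b_i$ there is nothing forcing $\alpha_{i+1}$ to lie inside $\textbf{P}(a_i,b_i)$, and the reuse of $p_{a_i,b_i}$ would no longer be justified.
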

It follows from the above observation that if
$\delta(r,v,\alpha_i)\le (1+{\epsilon})\delta(r,v,\alpha_j)$ for any $i<j$,
then $\textbf{P}(r,v,{\alpha}_i)$ may as well serve as
$(1+{\epsilon})$-approximate shortest path from $r$ to $v$ avoiding $\alpha_j$.
In other words, we need not store the detour associated with
$\textbf{P}(r,v,{\alpha}_j)$ in such situation.
Using this observation, we shall have to explicitly store only
$O(\log_{1+{\epsilon}} n)$ detours of type II.
Moreover, we do not store explicitly detours of type II whose length is much
larger than $level(v)$. Specifically, if
$\textbf{P}(r,v,x) \geq \frac{1}{{\epsilon}}level(v)$, then 
$v$ will merely store pointer to the path
$\textbf{P}(r,{\tt uchild}(x),x)::\textbf{P}({\tt uchild}(x),v)$.
This ensures that each detour of type II which $v$ has to
explicitly store will have length $O(\frac{1}{{\epsilon}}level(v))$.

It follows from the above description that for a special vertex
$v$ and $x\in \textbf{P}(r,v)$, the data structure associated with $v$ stores
$(1+2\epsilon)$-approximation of the path $\textbf{P}(r,v,x)$. Moreover, the total
space required by the data structure associated with all the special vertices
will be $O(n\frac{\log n}{\epsilon^3})$. This supplementary data
structure combined with the data structure of previous section can report
$(1+6\epsilon)$-approximation of $\textbf{P}(r,z,x)$ for any $z,x\in V$.

\begin{theorem}
Given an undirected unweighted graph $G=(V,E)$, source $r\in V$, and any 
$\epsilon>0$, we can build a data structure of size
$O(n\frac{\log n}{\epsilon^3})$ 
that can report $(1+\epsilon)$-approximate shortest path from $r$ to any 
$z\in V$ avoiding any failed vertex in optimal time.
\label{main-theorem-1+epsilon}
\end{theorem}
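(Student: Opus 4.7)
My plan is to combine the tree-partition data structure of Theorem \ref{tree-theorem}, applied to the unweighted BFS tree $T_r$, with the supplementary per-special-vertex structures just described, and then verify three things: correctness of the query procedure, a stretch of $1+O(\epsilon)$ (which I will rescale), and the claimed space bound. By Observation \ref{generic-t}, it suffices to handle queries to $z \in D_x$; the other cases ($z \in U_x$, for which $\textbf{P}(r,z)$ is untouched, and $z \in O_x$, which is reduced to the $D_x$ case through the auxiliary graph $G_r(O_x)$ of Lemma \ref{O_i-lemma}) carry over verbatim from the previous section.

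For the query on $(z,x)$ with $z$ a descendant of $x$ in $T_r$ (detectable by an $O(1)$-time LCA call), I would first check condition $\textbf{C}$, namely whether $\delta(\mathtt{uchild}(x),z) \le \frac{\epsilon}{2}\delta(r,z)$; both quantities are available in $O(1)$ time from $T_r$. If $\textbf{C}$ holds, I return the path $\textbf{P}(r,\mathtt{uchild}(x),x) :: \textbf{P}(\mathtt{uchild}(x),z)$ stored by the tree-partition structure; the inequality $\delta(r,\mathtt{uchild}(x),x) + \delta(\mathtt{uchild}(x),z) \le \delta(r,z,x) + 2\delta(\mathtt{uchild}(x),z)$ from the start of this section then immediately yields stretch $1+\epsilon$. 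If $\textbf{C}$ fails, I route through $w = S(z)$: since $x$ must then be a strict ancestor of $w$, the supplementary structure at $w$ delivers a path from $r$ to $w$ avoiding $x$ via the type-I or type-II mechanism, and I concatenate this with $\textbf{P}(w,z) \subseteq T_r$.

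The stretch bound of $1+O(\epsilon)$ for the second branch is a routine triangle-inequality chase: the type-I/type-II construction yields a $(1+2\epsilon)$-approximation of $\textbf{P}(r,w,x)$, and since $\delta(r,w,x) \le \delta(r,z,x)$ and $\delta(w,z) \le \frac{2\epsilon}{1+\epsilon}\mathrm{level}(z) \le 2\epsilon\,\delta(r,z)$ by Lemma \ref{prop1}, the concatenation has length at most $(1 + O(\epsilon))\delta(r,z,x)$. Rescaling $\epsilon$ by an absolute constant gives the claimed stretch. Each returned path decomposes into $O(1)$ pieces encoded implicitly via $T_r$, the tree-partition pointers, and $O(1)$-time LCA, so reporting is output-size optimal.

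The main obstacle, and the step I would be most careful about, is the space bound. I would argue it by a charging argument on special vertices. A special vertex $v$ at level $\ell = \lfloor (1+\epsilon)^i \rfloor$ stores $O(\log n)$ type-I pointers (one per level class of $L$) and, by the monotonicity observation, at most $O(\log_{1+\epsilon} n) = O(\epsilon^{-1}\log n)$ explicit type-II detours; each such detour has length $O(\ell/\epsilon)$, since longer detours are deferred to the tree-partition structure via the rule for $\textbf{P}(r,v,x) \ge \frac{1}{\epsilon}\mathrm{level}(v)$. This gives per-vertex storage $O(\ell \log n / \epsilon^2)$, which Lemma \ref{prop2} lets me amortize over the $|V(v)| \ge \epsilon\ell$ vertices served by $v$, yielding $O(\log n / \epsilon^3)$ per served vertex. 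Summing over all vertices and adding the $O(n\log n)$ contribution of Theorem \ref{tree-theorem} gives the claimed $O(n\log n/\epsilon^3)$ total. Preprocessing is straightforward: compute $T_r$, run the procedure of Theorem \ref{tree-theorem}, identify $S$ in linear time, then populate each special vertex's detour list in a single bottom-up pass, which I would verify fits within the same asymptotic time bound.
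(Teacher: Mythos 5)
Your proposal follows essentially the same route as the paper: reduce to $z\in D_x$ via Observation \ref{generic-t}, split on whether $\texttt{uchild}(x)$ is close to $z$, use the special-vertex hierarchy $S$ with Lemmas \ref{prop1}--\ref{prop2}, store $O(\log_{1+\epsilon}n)$ type-II detours of length $O(\ell/\epsilon)$ per special vertex, and amortize the resulting $O(\ell\log n/\epsilon^2)$ over the $\ge\epsilon\ell$ vertices served. Two small technical slips, both easily repaired but worth flagging: first, $\delta(r,w,x)\le\delta(r,z,x)$ is false in general since $\textbf{P}(r,z,x)$ need not pass through $w$; the correct step is $\delta(r,w,x)\le\delta(r,z,x)+\delta(w,z,x)=\delta(r,z,x)+\delta(w,z)$ (the $T_r$-path from $w$ to $z$ avoids $x$), which still yields $1+O(\epsilon)$ after bounding $\delta(w,z)\le\frac{2\epsilon}{1+\epsilon}\delta(r,z)\le 2\epsilon\,\delta(r,z,x)$. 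Second, the dichotomy should be decided by testing (via one LCA call) whether $x$ is a strict ancestor of $S(z)$ rather than by testing condition $\textbf{C}$ literally: with the paper's constants, $x$ being a descendant of $S(z)$ only guarantees $\delta(\texttt{uchild}(x),z)\le\frac{2\epsilon}{1+\epsilon}\delta(r,z)$, which does not match $\textbf{C}$'s threshold $\frac{\epsilon}{2}$ for small $\epsilon$; splitting on the ancestor test directly gives a $1+O(\epsilon)$ stretch in either branch and the final rescaling (as in the paper's $1+6\epsilon\to1+\epsilon$) absorbs the constants.
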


\section{All-pairs $(2k-1)(1+\epsilon)$-approx. distance oracle avoiding a 
failed vertex}
We start with a brief description of the approximate distance oracle of
Thorup and Zwick \cite{TZ:5}. The key idea to achieve sub-quadratic space is
to store distance from each vertex to only a small set of vertices. For
retrieving approximate distance between any two vertices $u,v\in V$, it is
ensured that there is a third vertex $w$ which is {\em close} to both of them,
and whose distance from both of them is known.
To realize this idea, Thorup and Zwick \cite{TZ:5} introduced two novel
structures called {\em ball} and {\em cluster} which are defined for any two
subsets $A,B$ of vertices as follows. 
(here $\delta(v,B)$ denotes the distance between $v$ and its nearest vertex from
$B$).
\[
Ball(v,A,B) = \{w\in A | \delta(v,w) < \delta(v,B)\} \hspace*{1cm}
C(w,A,B) = \{v\in V | \delta(v,w) < \delta(v,B) \}
\]
Construction of $(2k-1)$-approximate distance oracle of Thorup and Zwick
\cite{TZ:5} employs a $k$-level hierarchy ${\mathbf A}_k=\langle A_0\supseteq A_1 \supseteq A_2 ... \supseteq A_{k-1} \supset A_k\rangle$ of subsets of vertices
as follows.\\
$A_0=V$, $A_k=\emptyset$, and $A_{i+1}$ for any $i <k-1$ is formed by
selecting each vertex from $A_i$ independently with probability $n^{-1/k}$.

The data structure associated with the $(2k-1)$-approximate distance oracle of
Thorup and Zwick \cite{TZ:5} stores for each vertex $v\in V$ the following
information :
\begin{itemize}
\item the vertices of set $\cup_{i<k} Ball(v,A_i,A_{i+1})$ (and their distances).
\item the vertex from $A_i$ nearest to $v$ (to be denoted as $p_i(v)$).
\end{itemize}
Due to randomization underlying the construction of ${\mathbf A_k}$, the
expected size of $Ball(v,A_i,A_{i+1})$ is $O(n^{1/k})$, and hence the space
required by the oracle is $O(kn^{1+1/k})$.
We shall now outline the ideas in extending the $(2k-1)$-approximate distance
oracle to handle single vertex failure. Kindly refer to the extended version 
\cite{KB:09} of this paper for complete details.

\subsection{Overview of all-pairs approx. distance oracles avoiding a failed
vertex}
Firstly the notations used by the static approximate distance oracle of
\cite{TZ:5}, in particular ball and cluster, get extended
for single vertex failure in a natural manner as follows. 
(here $\delta(v,B,x)$ is the distance between $v$ and its nearest vertex 
from $B$ in $G\backslash\{x\}$).
\[ Ball^x(v,A,B)
  = \{w\in A | \delta(v,w,x) < \delta(v,B,x)\}
\]
\[ C^x(w,A,B)
  = \{v\in V | \delta(v,w,x) < \delta(v,B,x)\}
\]
Let $p_i^x(v)$ denote the vertex from $A_i$ which is nearest to $v$ in
$G\backslash \{x\}$.
Along the lines of the static approximate distance oracle of Thorup and Zwick
\cite{TZ:5}, the basic operation which the approximate distance oracle 
avoiding a failed vertex should support is the following :

{\em Report distance (exact or approximate) between $v$ and $w\in A_i$ if
$w \in Ball^x(v,A_i,A_{i+1})$ for any given $v,x\in V$.}

However, it can be observed that we would have to support this operation
implicitly instead of explicitly keeping $Ball^x(v,A_i,A_{i+1})$ for each
$v,x,i$. Our starting point is the simple observation that clusters and balls
are inverses of each others, that is,
$w\in Ball^x(v,A_i,A_{i+1})$ is equivalent to $v\in C^x(w,A_i,A_{i+1})$.
Now we make an important observation. Consider the subgraph ${\mathbf G_i}(w)$
induced by the vertices of set $\cup_{x\in V} C^x(w,A_i,A_{i+1})$. This subgraph
preserves the path $\textbf{P}(w,v,x)$ for each $x,v\in V$
if $w\in Ball^x(v,A_i,A_{i+1})$. So it suffices to keep a single source
(approximate) shortest paths oracle on ${\mathbf G_i}(w)$ with $w$ as the root.
Keeping this data structure for each $w\in A_i$ provides an implicit
compact data structure for supporting the basic operation mentioned above. Using
Theorem \ref{main-theorem-1+epsilon}, it can be seen that the space required
at a level $i$ will be of the order of
$\sum_{w\in A_i} |\cup_{x\in V} C^x(w,A_i,A_{i+1})|$, but  it is not clear whether we can get an upper bound of the
order of $n^{1+1/k}$ on this quantity. Here, as a new tool, we introduce the notion of $\epsilon$-truncated balls and
clusters.
\begin{definition}
Given a vertex $x$, any subsets $A,B$, and $\epsilon>0$
\[ Ball^x(v,A,B, \epsilon)
  = \left\{w\in A | \delta(v,w,x) < \frac{\delta(v,B,x)}{1+\epsilon} \right\}
\]

\label{epsilon-truncated}
\end{definition}
Instead of dealing with the usual balls (and clusters) under deletion of single
vertex, we deal with $\epsilon$-truncated balls (and clusters) under deletion
of single vertex. We note that  the inverse relationship between clusters and balls gets seamlessly extended  to $\epsilon$-truncated balls and clusters under single vertex failure as well. That is,
\[
   \sum_{w\in A_i} |\cup_{x\in V} C^x(w,A_i,A_{i+1},\epsilon)| = \sum_{v\in V} |\cup_{x\in V} Ball^x(v,A_i,A_{i+1},\epsilon)|
\]
So it suffices to get an upper bound on the size of the set
$\cup_{x\in V} Ball^x(v,A_i,A_{i+1},\epsilon)$ for any vertex $v\in V$. The following lemma states a very crucial property of
$\epsilon$-truncated balls which leads to prove the existence of a small set
$S$ of $O(\frac{1}{\epsilon^2}\log n)$ vertices such that
\begin{equation}
 \cup_{x\in V} Ball^x(v,A_i,A_{i+1},\epsilon) \subseteq
\cup_{x\in S} Ball^x(v,A_i,A_{i+1})
\cup Ball(v,A_i,A_{i+1})
\label{for-S}
\end{equation}
%
\begin{lemma}
In a given graph $G=(V,E)$, let $v$ be any vertex and let $u=p_{i+1}(v)$.
Let $x_1$ and $x_2$ be any two vertices on the $\textbf{P}(v,u)$ path with
$x_1$ appearing closer to $v$ on this path and
$\delta(v,A_{i+1},x_1) \le (1+\epsilon) \delta(v,A_{i+1},x_{2})$. Then\\
\hspace*{1in}
$
Ball^{x_1}(v,A_i,A_{i+1},\epsilon) \subseteq Ball(v,A_{i},A_{i+1})\cup Ball^{x_{2}}(v,A_i,A_{i+1})
$
\label{x1_and_x2}
\end{lemma}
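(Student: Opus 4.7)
The plan is to fix an arbitrary $w \in Ball^{x_1}(v,A_i,A_{i+1},\epsilon)$ and show $w$ lies in $Ball(v,A_i,A_{i+1})\cup Ball^{x_2}(v,A_i,A_{i+1})$ by splitting on whether $w\in Ball^{x_2}(v,A_i,A_{i+1})$. The first case is immediate, so I focus on the second, where $\delta(v,w,x_2) \ge \delta(v,A_{i+1},x_2)$. In this case my goal is to establish $\delta(v,w) < \delta(v,A_{i+1})$, which will place $w$ in the unperturbed ball.

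The first step is to chain three inequalities to obtain $\delta(v,w,x_1) < \delta(v,w,x_2)$. By definition of the $\epsilon$-truncated ball, $(1+\epsilon)\,\delta(v,w,x_1) < \delta(v,A_{i+1},x_1)$; the hypothesis of the lemma gives $\delta(v,A_{i+1},x_1) \le (1+\epsilon)\,\delta(v,A_{i+1},x_2)$; and the case assumption gives $\delta(v,A_{i+1},x_2)\le \delta(v,w,x_2)$. Composing the three produces $\delta(v,w,x_1) < \delta(v,w,x_2)$. This strict inequality forces every shortest $v$-to-$w$ path avoiding $x_1$ to pass through $x_2$: otherwise such a path would also avoid $x_2$, giving $\delta(v,w,x_2) \le \delta(v,w,x_1)$, a contradiction. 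I expect this to be the crucial and most delicate step, because it is what converts the rather opaque hypothesis on $A_{i+1}$-distances into a genuine structural statement about the $v$-to-$w$ detour.

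Having localized a shortest $\textbf{P}(v,w,x_1)$ through $x_2$, I would decompose $\delta(v,w,x_1) = \delta(v,x_2,x_1) + \delta(x_2,w,x_1)$. Using that $x_2$ lies on the shortest path $\textbf{P}(v,u)$ with $u = p_{i+1}(v) \in A_{i+1}$, I get $\delta(v,x_2) = \delta(v,A_{i+1}) - \delta(x_2,u)$. Combining this with the triangle inequality $\delta(v,w)\le \delta(v,x_2)+\delta(x_2,w)$ and the trivial bound $\delta(x_2,w)\le \delta(x_2,w,x_1)$ yields
\[
  \delta(v,w) \;\le\; \delta(v,A_{i+1}) - \delta(x_2,u) + \delta(v,w,x_1) - \delta(v,x_2,x_1).
\]

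Finally, I would control $\delta(v,w,x_1)$ from above using the truncated-ball bound together with the estimate $\delta(v,A_{i+1},x_1)\le \delta(v,u,x_1)\le \delta(v,x_2,x_1)+\delta(x_2,u)$; the second inequality holds because $x_1$ lies strictly between $v$ and $x_2$ on $\textbf{P}(v,u)$, so the $x_2$-to-$u$ subpath already avoids $x_1$ and hence $\delta(x_2,u,x_1)=\delta(x_2,u)$. Substituting $\delta(v,w,x_1)<[\delta(v,x_2,x_1)+\delta(x_2,u)]/(1+\epsilon)$ into the displayed inequality and collecting the $\epsilon/(1+\epsilon)$ factors gives
\[
 \delta(v,w) \;<\; \delta(v,A_{i+1}) - \frac{\epsilon}{1+\epsilon}\bigl[\delta(v,x_2,x_1) + \delta(x_2,u)\bigr] \;\le\; \delta(v,A_{i+1}),
\]
so $w\in Ball(v,A_i,A_{i+1})$, as required. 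The remaining subtlety is only to check that $\delta(v,x_2,x_1)+\delta(x_2,u)$ is strictly positive so that the inequality is strict; this holds whenever $v \neq x_2$, which is the only non-degenerate case.
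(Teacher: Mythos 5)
Your argument is correct and is essentially the paper's proof run in the forward rather than the contrapositive direction, with the two cases sequenced differently. The paper fixes $w\notin Ball(v,A_i,A_{i+1})\cup Ball^{x_2}(v,A_i,A_{i+1})$ and splits directly on whether $x_2$ lies on $\textbf{P}(v,w,x_1)$; you instead first derive, from the chain $\delta(v,w,x_1)<\delta(v,w,x_2)$, that $x_2$ is forced to lie on $\textbf{P}(v,w,x_1)$, which is exactly the contrapositive of the paper's Case 2 (if $x_2\notin\textbf{P}(v,w,x_1)$ and $w\notin Ball^{x_2}$ then $w\notin Ball^{x_1}(v,A_i,A_{i+1},\epsilon)$). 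Your subsequent inequality chase rests on the same three facts as the paper's Case 1: the decomposition $\delta(v,w,x_1)=\delta(v,x_2,x_1)+\delta(x_2,w,x_1)$, the additivity $\delta(v,u)=\delta(v,x_2)+\delta(x_2,u)$ along $\textbf{P}(v,u)$, and $\delta(x_2,u,x_1)=\delta(x_2,u)$. One difference worth noting: the paper's Case 1 does not use the $\epsilon$-truncated hypothesis at all --- once $x_2\in\textbf{P}(v,w,x_1)$ and $w\notin Ball(v,A_i,A_{i+1})$, it already follows that $w\notin Ball^{x_1}(v,A_i,A_{i+1})$, so your substitution of the truncated-ball bound into the final display is doing slightly more arithmetic than necessary. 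Also, your closing concern about whether $\delta(v,x_2,x_1)+\delta(x_2,u)$ is strictly positive is superfluous: the strictness of the final inequality is already supplied by $\delta(v,w,x_1)<\delta(v,A_{i+1},x_1)/(1+\epsilon)$, regardless of whether that bracketed sum vanishes.
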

\begin{proof}
Let $w$ be any vertex in $A_i$. It suffices to show the following. If $w$
does not belong to $Ball(v,A_{i},A_{i+1})\cup Ball^{x_{2}}(v,A_i,A_{i+1})$, then
$w$ does not belong to $Ball^{x_1}(v,A_i,A_{i+1},\epsilon)$.
The proof is based on the analysis of the following two cases.\\
{\bf Case 1 : ~The vertex $x_2$ is present in $\textbf{P}(v,w,x_1)$.}\\
Since, $w\notin Ball(v,A_i,A_{i+1})$, therefore, $\delta(v,w)$ is at least
$\delta(v,u)$. Hence using triangle inequality, $\delta(v,x_2) + \delta(x_2,w) \ge \delta(v,u)$. Now $\delta(v,u)=\delta(v,x_2) + \delta(x_2,u)$ (since $x_2$
lies on $P(v,u)$). Hence $\delta(x_2,w) \ge \delta(x_2,u)$. Moreover, since
$x_1$ does not appear on $\textbf{P}(x_2,u)$, so
$\delta(x_2,u)= \delta(x_2,u,x_1)$. So
\begin{equation}
\delta(x_2,w,x_1)\ge \delta(x_2,u,x_1)
\label{case-1-apasp}
\end{equation}
Now it is given that $x_2\in \textbf{P}(v,w,x_1)$, so $\textbf{P}(v,w,x_1)$
must be of the form $\textbf{P}(v,x_2,x_1)::\textbf{P}(x_2,w,x_1)$, the length
of which is at least $\delta(v,x_2,x_1) + \delta(x_2,u,x_1)$ using Equation
\ref{case-1-apasp}. The latter quantity is at least $\delta(v,u,x_1)$ which
by definition is at least $\delta(v,A_{i+1},x_1)$.
Hence $w\notin Ball^{x_1}(v,A_i,A_{i+1})$, and therefore,
$w\notin Ball^{x_1}(v,A_i,A_{i+1},\epsilon)$.\\
{\bf Case 2 : ~The vertex $x_2$ is not present in $\textbf{P}(v,w,x_1)$.}\\
In this case, $\delta(v,w,x_1) =  \delta(v,w,\{x_1,x_2\}) \ge \delta(v,w,x_2)$.
The value $\delta(v,w,x_2)$ is in turn at least $\delta(v,A_{i+1},x_2)$
since $w\notin Ball^{x_{2}}(v,A_i,A_{i+1})$. It is given that  $\delta(v,A_{i+1},x_2) \ge \frac{\delta(v,A_{i+1},x_1)}{1+\epsilon}$, hence conclude that
$\delta(v,w,x_1)\ge \frac{\delta(v,A_{i+1},x_1)}{1+\epsilon}$.
So $w\notin Ball^{x_1}(v,A_i,A_{i+1},\epsilon)$.\\
\end{proof}

We shall now outline the construction of a small set
$S$ of vertices which will satisfy Equation \ref{for-S}.
Let $u=p_{i+1}(v)$ and let $\textbf{P}(v,u)=v(=x_0),x_1,...,x_{\ell}(=u)$.
Observe that $\cup_{x\in V} Ball^x(v,A_i,A_{i+1},\epsilon) =
\cup_{1\le j\le \ell} Ball^{x_j}(v,A_i,A_{i+1},\epsilon)$.
For any node $x\in {\mathbf P}(u,v)$, let $value(x)=\delta(v,A_{i+1},x)$, and
let $h$ be the maximum $value$ of any node on this path. The set $S$ is
initially empty.

Let $\alpha(1)$ be the largest index from $[1,\ell]$ such that
$value(x_i)\ge h/(1+\epsilon)$. It can be seen that for all $j<\alpha(1)$,
$\delta(v,A_{i+1},x_j)\le(1+\epsilon)\delta(v,A_{i+1},x_{\alpha(1)})$.
Therefore, it follows from Lemma \ref{x1_and_x2} that for each vertex
$x\in \{x_1,...,x_{\alpha(1)}\}$,
$Ball^{x}(v,A_i,A_{i+1},\epsilon) \subseteq
Ball^{x_{\alpha(1)}}(v,A_i,A_{i+1})\cup Ball(v,A_i, A_{i+1})$.
So we insert $x_{\alpha(1)}$ to $S$. Similarly $\alpha(2)\in [\alpha(1)+1,\ell]$
be the greatest integer such that $value(x_{\alpha(2)})\ge h/(1+\epsilon)^2$. We
add $x_{\alpha(2)}$ to $S$, and so on. It can be seen that the set
$S$ constructed in this manner will satisfy Equation \ref{for-S} and its
size will be $O(\log_{1+\epsilon} h)=O(\frac{\log n}{\epsilon})$.

It can be shown using elementary probability theory that for each $x\in V$,
the set $Ball^x(v,A_i,A_{i+1})$ has size $O(n^{1/k}\log n)$ with high
probability. Therefore, the construction of the set $S$ outlined above
implies the following crucial bound for each $v\in V, i<k-1$ which helps us
design all-pairs approximate distance oracle avoiding a failed vertex.
\[ |\cup_{x\in V} Ball^x(v,A_i,A_{i+1},\epsilon)| =  O\left(n^{1/k} \frac{\log^2 n}{\epsilon}\right)\]
Using this equation, and owing to inverse relationship between clusters and
balls, it follows that $\sum_{w\in A_i}|\cup_{x \in V} C^x(w,A_i,A_{i+1},\epsilon)| = O\left(n^{1+1/k} \frac{\log^2 n}{\epsilon}\right)$.
Our all-pairs approximate distance oracle avoiding any failed vertex will keep
the following data structures.
\begin{itemize}
\item
Let $p_i^x(v,\epsilon)$ denote a vertex $w$ from $A_i$ with
$\delta(v,w,x)\le (1+\epsilon) \delta(v,p_i^x(v),x)$.
We keep a data structure $\textbf{N}_i$  $\forall i <k$, using which we can
retrieve $p_i^x(v,\epsilon)$. This data-structure is obtained by suitable
augmentation of our single source $(1+\epsilon)$-approximate oracle.
\item
For each $w\in A_i$, we  keep our single source $(1+\epsilon)$-approximate
oracle in $\textbf{G}_i(w,\epsilon)$ which
is the subgraph induced by $\cup_{x\in V} C^x(w,A_i,A_{i+1},\epsilon)$.
\end{itemize}
It follows that the overall space required by the data structure will be
$O(kn^{1+1/k} \frac{\log^3n}{\epsilon^4})$. The query
algorithm and the analysis on the stretch of the approximate distance
reported by the oracle are similar in spirit to that of
Thorup and Zwick \cite{TZ:5} (see \cite{KB:09} for details).
\begin{theorem}
Given an integer $k>1$ and a fraction $\epsilon>0$, an unweighted graph
$G=(V,E)$ can be processed to construct a data structure which can answer
$(2k-1)(1+\epsilon)$-approximate distance query between any two nodes
$u\in V$ and $v\in V$ avoiding any single failed vertex in $O(k)$ time.
The total size of the data structure is
$O(kn^{1+1/k}\frac{\log^3 n}{\epsilon^4})$.
\label{APASP-oracle-under-vertex-failure}
\end{theorem}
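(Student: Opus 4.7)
The plan is to establish the theorem in three parts: size, query time, and stretch, working with the two-tiered data structure already described in the overview, namely the augmented single-source $(1+\epsilon)$-oracles $\textbf{N}_i$ (for retrieving $p_i^x(v,\epsilon)$) and, for each $w\in A_i$, a copy of the single-source oracle of Theorem \ref{main-theorem-1+epsilon} rooted at $w$ on the subgraph $\textbf{G}_i(w,\epsilon)$ induced by $\cup_{x\in V}C^x(w,A_i,A_{i+1},\epsilon)$.

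For the space bound, I would observe that Theorem \ref{main-theorem-1+epsilon} applied to an $N$-vertex subgraph uses $O(N\frac{\log N}{\epsilon^3})$ space. Summing over $w\in A_i$ and using the inverse relationship between $\epsilon$-truncated balls and clusters, together with the per-vertex bound $|\cup_{x\in V}Ball^x(v,A_i,A_{i+1},\epsilon)| = O(n^{1/k}\frac{\log^2 n}{\epsilon})$ already derived from Lemma \ref{x1_and_x2}, the total size at level $i$ is $O(n^{1+1/k}\frac{\log^3 n}{\epsilon^4})$. Summing across the $k$ levels (and absorbing the modest cost of the $\textbf{N}_i$ structures, which are only linear per level up to polylog factors) yields the claimed $O(kn^{1+1/k}\frac{\log^3 n}{\epsilon^4})$ size.

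The query procedure mirrors Thorup--Zwick. Given $(u,v,x)$, I maintain a candidate pivot $w$ and alternately swap the roles of $u$ and $v$. At iteration $i$, I use $\textbf{N}_i$ on the current $v$ to retrieve $w_i = p_i^x(v,\epsilon)$ and check whether the oracle at $w_i$ admits the current $u$, i.e.\ whether $u\in C^x(w_i,A_i,A_{i+1},\epsilon)$; if so I output $\delta(u,w_i,x)+\delta(w_i,v,x)$ as read from the two single-source oracles. Each iteration is $O(1)$, and since $A_k=\emptyset$ we must succeed by level $k-1$, giving $O(k)$ query time.

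The main obstacle is the stretch analysis, which is where I expect to spend the bulk of the effort. I would argue by induction on $i$, paralleling Thorup--Zwick: failure of iteration $i$ means $w_i$ lies outside $Ball^x(v,A_i,A_{i+1},\epsilon)$, which combined with the approximate pivot $p_i^x(v,\epsilon)$ bounds $\delta(u,w_i,x)$ by roughly $i\cdot\delta(u,v,x)$ up to $(1+\epsilon)$ factors. When iteration $i^{\ast}$ succeeds, the output therefore approximates $\delta(u,v,x)$ within a $(2i^{\ast}+1)\le(2k-1)$ factor multiplied by a cumulative $(1+\epsilon)^{O(k)}$ factor arising from (a) the approximate nearest-vertex calls $\textbf{N}_i$, (b) the $\epsilon$-truncation of balls and clusters, and (c) the internal $(1+\epsilon)$ error of the single-source oracles of Theorem \ref{main-theorem-1+epsilon}. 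The final bookkeeping step is to rescale $\epsilon\leftarrow\epsilon/(ck)$ for a suitable absolute constant $c$ so that the compound $(1+\epsilon)^{O(k)}$ collapses to $(1+\epsilon)$; this degrades $1/\epsilon$ in the space bound by only a factor polynomial in $k$, which is absorbed into the $\frac{\log^3 n}{\epsilon^4}$ term, preserving the stated size.
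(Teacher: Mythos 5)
Your outline follows the same route as the paper: the two-tier design ($\textbf{N}_i$ plus a copy of the single-source $(1+\epsilon)$-oracle on each $\textbf{G}_i(w,\epsilon)$), the size bound derived from the inverse relation between $\epsilon$-truncated balls and clusters together with the $O(n^{1/k}\log^2 n/\epsilon)$ per-vertex bound coming out of Lemma~\ref{x1_and_x2}, and a Thorup--Zwick-style alternating query loop. Those parts are aligned; the paper itself defers the query-procedure and stretch analysis to the journal version, so there is no divergence to report there.

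The one place your argument actually has a gap is the final bookkeeping step. You concede a cumulative $(1+\epsilon)^{O(k)}$ error and propose rescaling $\epsilon\leftarrow\epsilon/(ck)$, then claim the resulting $\mathrm{poly}(k)$ degradation ``is absorbed into the $\frac{\log^3 n}{\epsilon^4}$ term, preserving the stated size.'' That is not an absorption: the size depends on the internal accuracy parameter as $1/\epsilon^4$, so replacing $\epsilon$ by $\epsilon/(ck)$ multiplies the bound by $\Theta(k^4)$, giving $O(k^5 n^{1+1/k}\frac{\log^3 n}{\epsilon^4})$ rather than the claimed $O(kn^{1+1/k}\frac{\log^3 n}{\epsilon^4})$. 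To actually close your proof you would need either (a) a stretch analysis in which the per-iteration $(1+\epsilon)$ factors do not compound multiplicatively across the $k$ levels, so no rescaling is needed, or (b) an explicit acknowledgment that the exponent on $k$ in the space bound worsens. A second, smaller point you leave implicit is how the query algorithm decides membership ``$u\in C^x(w_i,A_i,A_{i+1},\epsilon)$'' in $O(1)$ time; the subgraph $\textbf{G}_i(w_i,\epsilon)$ is induced by the union over all failed vertices $x'$, so containment of $u$ in that vertex set is necessary but not sufficient, and the per-$x$ membership test needs its own data-structure support, which should be spelled out if you want a self-contained proof.
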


\noindent
{\large \bf Future work.}~
{\bf (i)} 
Can we design a data structure for single source $(1+\epsilon)$-approx.
shortest paths avoiding a failed vertex for weighted graphs ? Such a data 
structure will immediately extend our all-pairs approx. distance oracle 
avoiding a failed vertex to weighted graphs.\\ 
{\bf (ii)} How to design approx. distance oracles avoiding two or
more failed vertices ? Recent work of Duan and Pettie \cite{DP:9}, and Chechik
et al. \cite{CLDR:9} provides additional motivation for this.

\bibliographystyle{abbrv}
\bibliography{khanna}

\begin{thebibliography}{10}

\bibitem{ahmed_et_al:LIPIcs:2009:1831}
M.~Ahmed and A.~Lubiw.
\newblock Shortest paths avoiding forbidden subpaths.
\newblock In {\em STACS '09: Proceedings of 26th International Symposium on
  Theoretical Aspects of Computer Science}, pages 63--74, Freiburg, Germany,
  2009. Schloss Dagstuhl - Leibniz-Zentrum fuer Informatik, Germany.

\bibitem{BK:6}
S.~Baswana and T.~Kavitha.
\newblock Faster algorithms for approximate distance oracles and all-pairs
  small stretch paths.
\newblock In {\em FOCS '06: Proceedings of the 47th Annual IEEE Symposium on
  Foundations of Computer Science}, pages 591--602, Washington, DC, USA, 2006.
  IEEE Computer Society.

\bibitem{BF:00}
M.~A. Bender and M.~Farach-Colton.
\newblock The lca problem revisited.
\newblock In {\em LATIN '00: Proceedings of the 4th Latin American Symposium on
  Theoretical Informatics}, pages 88--94, London, UK, 2000. Springer-Verlag.

\bibitem{BK:9}
A.~Bernstein and D.~Karger.
\newblock A nearly optimal oracle for avoiding failed vertices and edges.
\newblock In {\em STOC '09: Proceedings of the 41st annual ACM symposium on
  Theory of computing}, pages 101--110, New York, NY, USA, 2009. ACM.

\bibitem{CLDR:9}
S.~Chechik, M.~Langberg, D.~Peleg, and L.~Roditty.
\newblock Fault-tolerant spanners for general graphs.
\newblock In {\em STOC '09: Proceedings of the 41st annual ACM symposium on
  Theory of computing}, pages 435--444, New York, NY, USA, 2009. ACM.

\bibitem{DI:4}
C.~Demetrescu and G.~F. Italiano.
\newblock A new approach to dynamic all pairs shortest paths.
\newblock {\em J. ACM}, 51(6):968--992, 2004.

\bibitem{DT:1}
C.~Demetrescu, M.~Thorup, R.~A. Chowdhury, and V.~Ramachandran.
\newblock Oracles for distances avoiding a failed node or link.
\newblock {\em SIAM J. Comput.}, 37(5):1299--1318, 2008.

\bibitem{DP:9}
R.~Duan and S.~Pettie.
\newblock Dual-failure distance and connectivity oracles.
\newblock In {\em SODA '09: Proceedings of the Nineteenth Annual ACM -SIAM
  Symposium on Discrete Algorithms}, pages 506--515, Philadelphia, PA, USA,
  2009. Society for Industrial and Applied Mathematics.

\bibitem{HS:1}
J.~Hershberger and S.~Suri.
\newblock Vickrey prices and shortest paths: What is an edge worth?
\newblock In {\em FOCS '01: Proceedings of the 42nd IEEE symposium on
  Foundations of Computer Science}, page 252, Washington, DC, USA, 2001. IEEE
  Computer Society.

\bibitem{KB:09}
N.~Khanna and S.~Baswana.
\newblock Approximate shortest paths avoiding a failed vertex : optimal data
  structures for unweighted graphs.
\newblock {\em
  http://www.cse.iitk.ac.in/$\sim$sbaswana/publications/algorithmica-09.pdf}.

\bibitem{NPW:3}
E.~Nardelli, G.~Proietti, and P.~Widmayer.
\newblock Finding the most vital node of a shortest path.
\newblock {\em Theor. Comput. Sci.}, 296(1):167--177, 2003.

\bibitem{R:7}
L.~Roditty.
\newblock On the k-simple shortest paths problem in weighted directed graphs.
\newblock In {\em SODA '07: Proceedings of the eighteenth annual ACM-SIAM
  symposium on Discrete algorithms}, pages 920--928, Philadelphia, PA, USA,
  2007. Society for Industrial and Applied Mathematics.

\bibitem{TZ:5}
M.~Thorup and U.~Zwick.
\newblock Approximate distance oracles.
\newblock {\em J. ACM}, 52(1):1--24, 2005.

\end{thebibliography}

\end{document}